\newtheorem{theorem}{Property}[section]
\begin{document}
\title{A Geographic Multi-Copy Routing Scheme for DTNs With Heterogeneous Mobility}

\author{Yue Cao,
        Kaimin Wei,
        Geyong Min,~\IEEEmembership{Member,~IEEE},
        Jian Weng,
        Xin Yang and
        Zhili Sun,~\IEEEmembership{Member,~IEEE}

\thanks{Y.Cao, X.Yang and Z.Sun are with the Institute for Communication Systems (ICS), University of Surrey, Guildford, UK. Email: y.cao; xin.yang; z.sun@surrey.ac.uk.}
\thanks{K.Wei (corresponding author) and J.Weng are with the Department of Computer Science, Jinan University, China. Email: cswei@jnu.edu.cn.}
\thanks{G.Min is with the High Performance Computing and Networking (HPCN) research group, University of Exeter, UK. Email: g.min@exeter.ac.uk.}
\thanks{This work is supported by National Natural Science of Foundation of China (Grant No. 61502202), PAPD and CICAEET. Manuscript received on 19 September 2015, revised on 5 January 2016 and 21 March 2016, accepted on 1 May 2016.}
}

\markboth{IEEE Systems Journal}%
{}
\maketitle

\begin{abstract}
Previous geographic routing schemes in Delay/Disruption Tolerant Networks (DTNs) only consider the homogeneous scenario where nodal mobility is identical.
Motivated by this gap, we turn to design a DTN based geographic routing scheme in heterogeneous scenario.
Systematically, our target is achieved via two steps:
1) We first propose ``The-Best-Geographic-Relay (TBGR)'' routing scheme to relay messages via a limited number of copies, under the homogeneous scenario.
We further overcome the local maximum problem of TBGR given a sparse network density, different from those efforts in dense networks like clustered Wireless Sensor Networks (WSNs).
2) We next extend TBGR for heterogeneous scenario, and propose ``The-Best-Heterogeneity-Geographic-Relay (TBHGR)'' routing scheme considering individual nodal visiting preference (referred to non-identical nodal mobility).
Extensive results under a realistic heterogeneous scenario show the advantage of TBHGR over literature works in terms of reliable message delivery, while with low routing overhead.
\end{abstract}

\begin{IEEEkeywords}
DTNs, Geographic Routing, Heterogeneous Nodal Mobility, Social Daily Preference.
\end{IEEEkeywords}

\IEEEpeerreviewmaketitle

\section{Introduction}
\IEEEPARstart{D}{E}lay/Disruption Tolerant Networks (DTNs) \cite{my} can be applied to various scenarios in Wireless Sensor Networks (WSNs) with large size data for transmission, where mobile nodes collect sensing data from others and then offload the data to a sink destination.
Due to either technical or economic reason, it is impossible to establish direct communication path from a data source to a sink destination under such scenarios.
Furthermore, the network may even be partially disconnected, owing to limited transmission range or sparse network density.
In addition to those traditional applications of WSNs \cite{Xie2014,shen}, one of the recent social well-being applications is the implementation of WSNs in Vehicle Ad hoc NETworks (VANETs), also known as Vehicular Sensor Networks (VSNs) \cite{Wen2015395}.

Focusing on the VSNs scenario for delay tolerant based sensing data (or referred to messages) collection, vehicles can help to monitor network (e.g., weather conditions, pollution measurements or road surface conditions) and deliver data to certain deployed destinations.
Different from traditional WSNs where nodes are energy constrained and densely deployed (or with low mobility), vehicles in VSNs are not limited by energy but highly mobile.
In this context, the network topology is highly dynamic and even frequently disconnected.

In DTNs, majority of research efforts have been paid on routing under such intermittently connected scenario.
Different from numerous previous works relying on network topology information to relay messages, geographic routing in DTNs has not received much attention in the literature \cite{my}.
Since it is difficult to obtain the most recent network topology information, instead, relying on geographic information (by tracking where the destination is \cite{zb1,zb2}) to relay messages is feasible.
This has been a valid assumption in traditional Mobile Ad hoc NETworks (MANETs) and WSNs.
It is highlighted that this assumption is also well applicable to VSNs scenario, as the collection points (destinations) are generally located at places where there is high vehicles penetration.
In particular, a recent survey \cite{gsurvey} has reviewed this explicit branch, where related works have already shown the improved routing performance of geographic routing over topological routing in DTNs.

In spite of this motivation, the sparse network density\footnote[1]{In sparse VANETs or VSNs, network nodes are normally connected in a sparse mode and also opportunistically encounter each other. As such, the network would experience frequent network disconnection in rural locations with sparsely populated roads, and also during non-busy hours such as late night.} inevitably has influence on making reliable routing decision and handling the local maximum problem \cite{Karp:2000:GGP:345910.345953}, specifically:
\begin{itemize}
\item In clustered WSNs, a message is geographically relayed to its destination via continuously connected path, thanks to high network density.
    However, this is infeasible in DTNs where a node closer to the destination may move away and even not encounter others in future, due to the sparse network density.
\item  Besides, handling the local maximum problem\footnote[2]{Considered as routing void, this problem implies that if a better relay node is unavailable, the message carrier will keep on carrying the message. In light of this, the message delivery is delayed or even degraded if a better relay node is never met.} in DTNs also faces some new challenges, because of lacking enough encountered opportunities to find relay nodes in sparse networks.
\end{itemize}

Most of routing schemes in DTNs rely on redundancy to improve message delivery, meaning a message is replicated with multiple copies in a network.
The motivation behind this is to enable at least one of these message copies can be delivered successfully before expiration deadline.
As already investigated in \cite{7055214}, limiting the number of message copies (relayed by
the source and possibly other nodes receiving a copy) to $L$\footnote[3]{Note that the initialization of $L$ is application dependent, e.g., based on number of nodes in network to support delay tolerant requirement.}, benefits more from the scenario where nodes are highly mobile.

Concerning mobility, its heterogeneity refers to the behavior that mobile nodes in the same group would have a common preference, to move within a certain area consisting of some popular places.
Whereas this behavior is differentiated among those in diverse groups.
For example, vehicles in VSNs may have daily preferences to deliver data to different collection points.
Inherently, topological routing schemes in DTNs can be applied to both homogeneous and heterogeneous scenarios, because the nodal encounter prediction is based on historical information, e.g., encounter frequency, inter-contact time and encounter duration.
Here, the heterogeneity of mobile nodes can be reflected, because these metrics rely on the fact that ``whether'' pairwise nodes have encountered in the past.

In spite that a few works \cite{6837524,6587056} have attempted geographic utility metric for DTN routing, they generally assume homogeneous scenario where mobile nodes ``will'' encounter destination, however without concerning mobility heterogeneity.
In the worst case, the message would not be relayed to heterogeneous mobile nodes which move within an area (where the destination is located).
By limiting the number of message copies, relaying them to few nodes encountered, implies that some or all of these copies will end up with nodes that may never meet the destination.
With this concern, we address geographic routing in DTNs comprising the mobility heterogeneity, with the following contributions:
\begin{enumerate}
\item Starting from the homogeneous scenario, we first propose The-Best-Geographic-Relay (TBGR) with a limited $L$ number of message copies for delivery.
    Here, based on a given geometric metric for selecting relay node, the routing reliability due to inconsistent nodal motion statuses (in terms of moving direction and activity\footnote[4]{The activity refers to the case that mobile nodes will be temporarily stationary in the network.}) is improved, while the local maximum problem is solved considering the sparse network density.
\item By generalizing the nature of TBGR under the homogeneous scenario, we next turn to the heterogeneous scenario.
    Here, we propose The-Best-Heterogeneity-based-Geographic-Relay (TBHGR) which geographically relays $L$ message copies to the destination, located in the area around which other nodes are heterogeneous.
    Here, we concern the nodal mobility heterogeneity (in terms of visiting preference), in addition to moving direction and activity as investigated in TBGR.
    Along with this, we also manage messages for transmission and storage due to the network resources contention.
\end{enumerate}

The rest of this article is organized as follows.
The related work is presented in Section \uppercase\expandafter{\romannumeral2}.
Then, we present the design as well as the analysis of TBGR in Section \uppercase\expandafter{\romannumeral3}, followed by the design of TBHGR in Section \uppercase\expandafter{\romannumeral4}.
Finally, we conclude our work in Section \uppercase\expandafter{\romannumeral5}.

\section{Related Works}
In the literature, the simplest scheme, Direct Delivery (DD) \cite{1026005}, lets the source node carry messages and deliver them to their destinations.
Although this scheme performs only one transmission, it is extremely slow.
Therefore, other works relay (without replicating any copy through transmission) messages either based on topological \cite{6512845} or geographic utility metrics\cite{1543743,5779182}.
Even if they can achieve a faster delivery than DD, their performance is dramatically degraded in case of sparse network density.
Instead, using redundant message copies has been widely investigated, with the following two main branches, depending on whether or not to limit the number of copies a message can be replicated.

\subsection{Relaying Messages Without Limiting Copy Numbers}
Since Epidemic \cite{Vahdat00epidemicrouting} naively replicates message copies, it only performs well when no contention exists for shared network resources like bandwidth and buffer space.
Many previous works utilize topological utility metrics \cite{prophet} to qualify nodes for selected replication, compared to a few works utilize geographic utility metrics \cite{6837524,6587056}.
To enhance routing efficiency, Delegation Forwarding (DF) \cite{Erramilli:2008:DF:1374618.1374653} enables a message to cache an updated threshold value which is equal to the topological utility metric (in relation to the message destination), and relays message copy to a node (with a better utility metric than this cached threshold).
If without using DF, a node does not keep a threshold value and certainly the message carrier does not update this value after it encounters a better quality node.
While if using DF, a node will raise this threshold value to the quality of a better candidate node, and only uses this threshold value for further comparison.
Thus, with the increase of its level, the replication chance of message carrier is expected to be decreased, which means the number of copies duplicated for a message is also to be reduced.

\subsection{Relaying Messages With Limiting Copy Numbers}
Previous works in this branch assume that when enough nodes in the network are sufficiently mobile, replicating a message  with a limited number of copies is able to achieve an efficient message delivery.
Authors in \cite{4430784} propose Spray-and-Wait (SaW) algorithm, in which a copy ticket $C_M$ is defined for each message, to control how many copies a message can still be replicated.
Depending on an initial value $L$ for $C_M$, Source-SaW (S-SaW) lets source node replicates additional $(L-1)$ copies with a single copy ticket $(C_M=1)$ distributed.
In contrary, Binary-SaW (B-SaW) adopts a binary tree mechanism to equally distribute copy tickets for replicated copy, that is $\frac{C_M}{2}$ rather than $(C_M=1)$.
Here, any intermediate nodes carrying a copy with $(C_M>1)$ copy tickets can also perform message replication.
Note that although to proportionally distribute $C_M$ has been initially proposed in \cite{5061994}, while \cite{4176957} proves that an optimal distribution depends on the value of $L$.
In this context, to equally distribute $C_M$ as adopted in B-SaW, has been widely adopted in literature.
This is because that the way to distribute $C_M$ is independent of utility metric, while an optimal performance is achieved in case of homogeneous nodal mobility only.
Considering the heterogeneous nodal mobility, replicating message copies \cite{4731259} to a better qualified $(L-1)$ nodes has been investigated.
To expedite delivery via topological utility metric, EBSR \cite{7055214} further relays (without generating additional copies) a message copy with $(C_M=1)$ copy ticket.
Based on geographic utility metric and underlying map topology, GeoSpray \cite{geospray} (via underlying map topology) calculates the Nearest Point (NP) to destination to guide message relay.

\subsection{Motivation}
All reviewed geographic routing schemes in DTNs assume homogeneous scenario.
However, existing homogenous model defines identical mobility pattern that cannot completely and realistically depict the heterogeneous scenario.
In the latter case, the nodal mobility is limited within a certain area, as such relaying a message across different areas may suffer from a high overhead (relaying the message to nodes which never encounter destination) or even failure (the message will not be relayed to nodes moving towards the area where the destination is located).
In this article, we tackle geographic routing scheme in DTNs comprising heterogeneous nodal mobility.

\section{Design of TBGR}
\subsection{Assumption}
We assume that the Global Position System (GPS) is available for all mobile nodes in the network, while the energy consumption for running GPS could be neglected in VSNs.
The location of the stationary sink destination is available for all mobile nodes under a two dimensional scenarios, we also allow multiple sink destinations coexisting in a network.
Although the influence of location error inevitably affects the accuracy of routing decision, our research effort is for improving the message delivery based on any delivery time based metric.
In other words, the nature of TBGR as discussed herein is still applicable to the condition where location error exists.
Here, the encounter possibility between pairwise nodes is identical under a homogeneous network.

A slotted based collision avoidance MAC protocol is applied by each node for contention resolution, such that only one connection can be established between two neighbor nodes at each time slot.
Therefore, any node can not perform transmission to more than one node within each slot.
Also, in networks that are quite sparse, we expect that only a few nodes would be close enough each time to compete for bandwidth.
Certain media access algorithm \cite{Mostafa:2014:RAP:2580129.2580647} will determine and limit the accessing media and its duration (allocation time) within an encounter duration.

Although we envision for delay tolerant based data collection in VSNs, messages are usually with a certain lifetime, evaluated as Time-To-Live (TTL).
Within a message lifetime, a limited $L$ number of copies including $(L-1)$ replicated copies and the original message will exist in a network, such that at least one of them can be successfully delivered by the sink destination.
Otherwise, the message will be discarded until TTL expires.

\begin{table}[htbp]\scriptsize
\renewcommand{\arraystretch}{1.3}
\caption{List of Notations Defined in TBGR}\vspace{-10pt}
\label{configure}
\centering
\begin{tabular}{|p{0.7cm}|p{6.6cm}|}
\hline
$N_i$ & Message carrier\\\hline
$N_j$ & Encountered node\\\hline
$N_d$ & Destination of message\\\hline
$M$ & Message carried by $N_i$\\\hline
$T^{ini}_M$ & Initial message lifetime\\\hline
$T^{ela}_M$ & Elapsed time since message generation\\\hline
$D_{i,d}$ & Distance from $N_i$ to $N_d$, and similarly for $D_{j,d}$\\\hline
$\phi_{i,d}$ & Relative angle between the moving direction of $N_i$ and $D_{i,d}$, and similarly for $\phi_{j,d}$\\\hline
$S_i$ & Moving speed of $N_i$, similarly for $S_j$\\\hline
$R$ & Transmission range of device\\\hline
$V^T_M$ & Heuristic delivery time based threshold value cached in $M$\\\hline
$C_M$ & Copy ticket of $M$\\\hline
$L$ & Initially defined value for $C_M$ \\\hline
\end{tabular}
\label{table1}
\end{table}

\subsection{Geographically Relaying $L$ Message Copies}
As described in TABLE \ref{table1}, we denote $N_i$ as the message carrier, $N_j$ is its encountered node, while $N_d$ is denoted as the destination for message $M$.
Based on an encounter between $N_i$ and $N_j$, a routing decision is made based on whether $N_j$ is a ``better'' option to carry a message copy.
Note that the original message is allowed for replicating with additional $(L-1)$ copies, based on a pre-defined value $L$.

Here, the copy ticket $C_M$ is defined as an additional flag in each message, to control how many copies a message can still be replicated.
Initially, $C_M$ is set to $L$ for any newly generated message.
Since the effectiveness to equally distribute $C_M$ has been analyzed in \cite{4430784} and widely adopted by previous works, we thereby apply this mechanism in this article.
Further concerning on how to proportionally distribute message copy tickets based on a utility metric is out of our discussion.

To generalize the nature of TBGR, we herein select an existing utility metric \cite{5779182} which jointly considers nodal distance, moving speed and direction to capture nodal homogeneous mobility.
In Fig.\ref{ta}, this utility metric is calculated as $\frac{D_{i,d}-R}{S_i\times\cos\phi_{i,d}}$ where $\phi_{i,d}$ (refer to its calculation in \cite{5982776}) is the relative angle between the moving direction of $N_i$ and the distance $D_{i,d}$ measured from $N_i$ to $N_d$.
Since a smaller $\phi_{i,d}$ implies a smaller diversity to $N_d$, this metric intends to capture the time for $N_i$ to intersect $N_d$ given a fast moving speed $S_i$.

Upon the efficient message delivery via a limited number of message copies, we propose that $N_i$ will relay a message copy to $N_j$, and equally distribute the value of $C_M$ for the message carried by $N_i$ and the copy replicated to $N_j$, given that:
\begin{equation}\footnotesize
\left(\frac{D_{i,d}-R}{S_i\times\cos\phi_{i,d}}>\frac{D_{j,d}-R}{S_j\times\cos\phi_{j,d}}\right)\cap\left(\phi_{i,d}<\frac{\pi}{2}\right)
\cap\left(\phi_{j,d}<\frac{\pi}{2}\right)
\label{con1}
\end{equation}
This implies that $N_j$ has a better delivery potential than $N_i$, and is given with a message copy with $\frac{C_M}{2}$ copy tickets for further dissemination at upcoming encounter opportunities.

\begin{figure}[htbp]
\begin{center}
\includegraphics[scale=0.38]{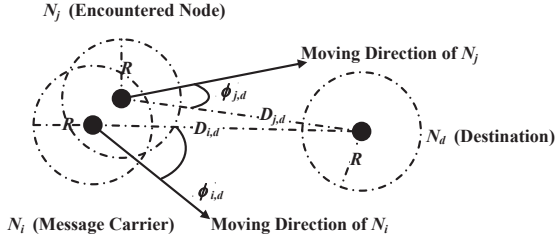}
\caption{Illustration of Geometric Metric}\vspace{-10pt}
\label{ta}
\end{center}
\end{figure}

However, we note that this utility metric has limitation.
This is because to calculate $\frac{D_{i,d}-R}{S_i\times\cos\phi_{i,d}}$ is invalid given $\left(\phi_{i,d}\geq\frac{\pi}{2}\right)$, for which $\frac{D_{i,d}-R}{S_i\times\cos\phi_{i,d}}$ must be a positive value.
In light of this, the above routing decision has limitation when the message carrier is moving away from its destination, even if its encountered node is moving towards its destination.
Upon the nature of DF adopting a threshold value to qualify the encountered node, this operation motivates us to overcome the above limitation, since the quality of message carrier is not required for making routing decision.
In light of this, we utilize DF by caching the threshold value $V^T_M$ for each message, and then convert the conditions \eqref{con1} into:
\begin{equation}\footnotesize
\left(V^T_M>\frac{D_{j,d}-R}{S_j\times\cos\phi_{j,d}}\right)\cap\left(\phi_{j,d}<\frac{\pi}{2}\right)
\label{con2}
\end{equation}
Note that $V^T_M$ is an updated value for recording $\left(\frac{D_{j,d}-R}{S_j\times\cos\phi_{j,d}}\right)$, as an additional flag defined in the message.
Consequently, as illustrated between lines 8 and 11 in Algorithm \ref{dgs1}, if given $\left(V^T_M>\frac{D_{j,d}-R}{S_j\times\cos\phi_{j,d}}\right)\cap\left(\phi_{j,d}<\frac{\pi}{2}\right)$, $N_i$ will set $\frac{C_M}{2}$ copy tickets for the replicated copy to $N_j$, and keeps the rest $\left(C_M-\frac{C_M}{2}\right)$ copy tickets for the message carried by itself.

Here, $V^T_M$ is updated to $\frac{D_{j,d}-R}{S_j\times\cos\phi_{j,d}}$ given $\left(V^T_M>\frac{D_{j,d}-R}{S_j\times\cos\phi_{j,d}}\right)\cap\left(\phi_{j,d}<\frac{\pi}{2}\right)$.
This value is recorded as the heuristic delivery time of the historically encountered node, to compare with that of upcoming encountered node.
Therefore, the limitation in relation to the moving direction of $N_i$ is overcome, by comparing the utility metric of historically encountered node with that of upcoming encountered node.

Although the above decision may also select the node of which $\phi_{j,d}$ is close to $\frac{\pi}{2}$, the routing decision will be towards optimality.
This is because that apart from the gradually decreased $D_{j,d}$, to gradually update $V^{T}_M$ implies either the smallest $\phi_{j,d}$ or largest $S_j$.
In contrast, when using conditions \eqref{con1}, $N_j$ might be selected to carry $M$, given that $\phi_{j,d}$ is close to $\frac{\pi}{2}$ while $S_j$ is large.
Consequently, although $M$ has been relayed for several copies, it may not be close to $N_d$.
In this context, in TBGR the quality of the relay node will be gradually converged and the ``best\footnote[5]{In addition to concern inconsistent nodal moving direction, TBGR can be extended to other condition such as nodal activity. This is further discussed in analysis section.}'' relay node will be found.

We also make modification by initializing $V^T_M$ with an infinitely large value for each message, rather than $\frac{D_{i,d}-R}{S_i\times\cos\phi_{i,d}}$.
The motivation is to overcome the limitation if $M$ is generated in $N_i$, but it is moving away from $N_d$ as given by $\left(\phi_{i,d}\geq\frac{\pi}{2}\right)$.
Furthermore, if both $N_i$ and $N_j$ have a copy of the same message, it is essential to update $V^T_M$ towards a smaller value for these two copies.
This operation is performed in a distributed manner, and only happens when there is an encounter between two nodes holding copies of the same message.

\subsection{Local Maximum Problem Handling}
The local maximum problem happens if $N_j$ does not meet conditions $\left(V^T_M>\frac{D_{j,d}-R}{S_j\times\cos\phi_{j,d}}\right)\cap\left(\phi_{j,d}<\frac{\pi}{2}\right)$.
In this case, we propose to still keep on replicating a copy with equally distributed $C_M$, if $\frac{D_{i,d}-R}{S_i\times\cos\phi_{i,d}}$ is longer than the TTL, given:
\begin{equation}\footnotesize
\left(\frac{D_{i,d}-R}{S_i\times\cos\phi_{i,d}}>T^{ini}_M-T^{ela}_M\right)
\label{cc}
\end{equation}
Here, $T^{ela}_M$ is denoted as the elapsed time since message generation, while $T^{ini}_M$ is denoted as the initialized message lifetime.
Thus, $(T^{ini}_M-T^{ela}_M)$ is calculated as the TTL of message.

Since for message delivery before its TTL, $\frac{D_{i,d}-R}{S_i\times\cos\phi_{i,d}}$ should not be longer than $\left(T^{ini}_M-T^{ela}_M\right)$.
Therefore, giving a message copy to $N_j$ with $\frac{C_M}{2}$ copy tickets distributed, is thus executed when $\left(\frac{D_{i,d}-R}{S_i\times\cos\phi_{i,d}}>T^{ini}_M-T^{ela}_M\right)$.
This hopes that $N_j$ would encounter other nodes that with a shorter time to intersect $N_d$ at the upcoming encounter.

Furthermore, we convert conditions $\left(\frac{D_{i,d}-R}{S_i\times\cos\phi_{i,d}}>T^{ini}_M-T^{ela}_M\right)\cap\left(\phi_{i,d}<\frac{\pi}{2}\right)$ into:
\begin{equation}\footnotesize
\left(V^T_M>T^{ini}_M-T^{ela}_M\right)\
\label{local1}
\end{equation}
It is observed that, by removing the condition $\left(\phi_{i,d}<\frac{\pi}{2}\right)$ for calculating $\frac{D_{i,d}-R}{S_i\times\cos\phi_{i,d}}$, the condition \eqref{local1} applied for decision between lines 12 and 14 in Algorithm \ref{dgs1}, implies that $V^T_M$ as the smallest value of delivery time recorded in the network, is still longer than the TTL.
Note that updating $V^T_M$ is excluded in this case, due to $\left(V^T_M\leq\frac{D_{j,d}-R}{S_j\times\cos\phi_{j,d}}\right)$.
Meanwhile, since the value of $V^T_M$ is not with an infinitely large value due to $\left(V^T_M\leq\frac{D_{j,d}-R}{S_j\times\cos\phi_{j,d}}\right)$, using the historically lowest value of $\frac{D_{j,d}-R}{S_j\times\cos\phi_{j,d}}$ to compare with $(T^{ini}_M-T^{ela}_M)$, adequately enhances the reliability for message delivery depending on TTL.

\subsection{Computation Complexity of Routing Decision}
The computation complexity of TBGR is $O(\log_2L)$ to have $L$ message copies with $(C_M=1)$ copy ticket.
This is because copy tickets is distributed by a binary manner, and TBGR chooses $(L-1)$ relay nodes based
on the $V^T_M$ in descending order.

\begin{algorithm}[htbp]\footnotesize
\caption{Routing Process of TBGR}
\label{dgs1}
\begin{algorithmic}[1]
\STATE set $(C_M=L)$
\STATE set $V^T_M$ with an infinitely large value
\FOR{each encounter between $N_i$ and $N_j$}
\FOR{each $M$ carried by $N_i$}
\IF {$N_j$ already has a copy of $M$}
\STATE update $V^T_M$ for $M$ including its copy carried by both $N_i$ and $N_j$
\ELSIF {$(C_M>1)\cap(S_j\neq0)$}
\IF {$\left(V^T_M>\frac{D_{j,d}-R}{S_j\times\cos\phi_{j,d}}\right)\cap\left(\phi_{j,d}<\frac{\pi}{2}\right)$}
\STATE update $V^T_M$ towards $\frac{D_{j,d}-R}{S_j\times\cos\phi_{j,d}}$
\STATE replicate $M$ to $N_j$ with $\frac{C_M}{2}$ copy tickets
\STATE keep $\left(C_M-\frac{C_M}{2}\right)$ copy tickets for $M$ in $N_i$
\ELSIF {$\left(V^T_M>T^{ini}_M-T^{ela}_M\right)$}
\STATE replicate $M$ to $N_j$ with $\frac{C_M}{2}$ copy tickets
\STATE keep $\left(C_M-\frac{C_M}{2}\right)$ copy tickets for $M$ in $N_i$
\ENDIF
\ENDIF
\ENDFOR
\ENDFOR
\end{algorithmic}
\end{algorithm}

\subsection{Analysis}
We consider the Random WayPoint (RWP) mobility model, where the meeting times of nodes are assumed to be Independent and Identically Distributed (IID) exponential random variables.
This follows homogeneous nodal mobility as assumed for TBGR.
Here, the number of nodes in a network is denoted as $K$.
The contention of the limited bandwidth and buffer space is not taken into account, as we are mainly concerned with the routing nature of TBGR.

The version that TBGR without local maximum handling but with different policies for distributing copy tickets $C_M$, are named as S-TBGR which distributes $(C_M=1$) and B-TBGR which distributes $\frac{C_M}{2}$ respectively.
Here, we name the S-TBGR without improving routing reliability and local maximum handling, as S-A-Better-Geographic-Relay (S-ABGR).

\begin{theorem}
S-TBGR achieves a higher message delivery ratio than S-ABGR.
\label{t1}
\end{theorem}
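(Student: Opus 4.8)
The plan is to argue within the RWP model fixed above, in which pairwise meeting times are IID exponential, by coupling the two schemes on one common realisation of the node trajectories and the induced encounter sequence. Fix a message $M$ with lifetime $T^{ini}_M$ and write $U_x=\frac{D_{x,d}-R}{S_x\cos\phi_{x,d}}$ for the delivery-time metric of a node $x$. Since both S-TBGR and S-ABGR distribute copy tickets as $(C_M=1)$, only the source ever replicates, and in either scheme the delivery ratio equals the probability that some node currently holding a copy of $M$ intersects $N_d$ before $T^{ini}_M$ elapses. I would first isolate a monotonicity lemma: under IID exponential meetings this probability is non-decreasing both in the set of nodes that have received a copy by a given time and in the quality of each such node, where quality is measured by smallness of $U$ at the instant the copy was acquired --- a smaller $U$ meaning the carrier heads towards $N_d$ faster and hence, stochastically, reaches it sooner. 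It then suffices to show that along the coupled realisation the copy-holder multiset of S-TBGR dominates that of S-ABGR in both senses.

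This domination I would prove by induction over encounters, with the invariant: after any prefix of encounters S-TBGR has spent at least as many copy tickets as S-ABGR, and, listing the copy-holders of each scheme in ascending order of acquisition-time metric, the $i$-th S-TBGR entry is no larger than the $i$-th S-ABGR entry for every index up to the length of the S-ABGR list. The base case is immediate, both schemes starting with the source alone. For the step at an encounter of the source $N_i$ with a node $N_j$: if $\phi_{j,d}\ge\frac{\pi}{2}$ neither scheme replicates, by \eqref{con1} and \eqref{con2} respectively; if $\phi_{j,d}<\frac{\pi}{2}$ and S-ABGR replicates, i.e. $U_i>U_j$ with $\phi_{i,d}<\frac{\pi}{2}$, then S-TBGR matches the new S-ABGR holder of metric $U_j$ --- either by accepting $N_j$ now, or, if it declines with a ticket still in hand, because then $V^T_M\le U_j$ and S-TBGR already holds a copy acquired at threshold $V^T_M$; and if $\phi_{i,d}\ge\frac{\pi}{2}$ while $\phi_{j,d}<\frac{\pi}{2}$ and $V^T_M>U_j$, then S-TBGR replicates whereas S-ABGR, forced to keep the copy on a node receding from $N_d$, cannot --- strictly enlarging S-TBGR's copy set and making the inequality strict. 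The two features producing this slack are precisely those distinguishing S-TBGR from S-ABGR, namely dropping the carrier-side requirement $\phi_{i,d}<\frac{\pi}{2}$ and initialising $V^T_M$ to an infinitely large value so that the first suitably oriented encounter is always taken (the optional lowering of $V^T_M$ when two copy-holders meet only refines the threshold and can only help).

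Combining the invariant with the monotonicity lemma gives a delivery ratio for S-TBGR at least that of S-ABGR on every coupled realisation, hence in expectation, and strictness follows because under RWP the third case above --- the source oriented away from $N_d$ while meeting a node with $\phi_{j,d}<\frac{\pi}{2}$ and metric below the current $V^T_M$ --- occurs with positive probability within a lifetime, so with positive probability S-TBGR delivers where S-ABGR fails. The main obstacle I anticipate is the sub-case where S-TBGR, having pulled ahead, has already exhausted all $(L-1)$ tickets while S-ABGR still holds one and accepts a further node $N_j$: to keep the invariant one must exploit that the metrics of S-TBGR's accepted relays form a strictly decreasing sequence (each accepted strictly below the then-current threshold), so that S-TBGR's list, though no longer growing, still dominates termwise unless S-ABGR's downward comparison against the wandering source could certify a value below S-TBGR's best --- which a finer accounting of when the source's own metric can exceed $V^T_M$ should rule out; should this sample-path argument prove too delicate, the fallback is to establish the same conclusion in the weaker form of stochastic dominance of the two schemes' first-delivery times. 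A secondary subtlety, to be absorbed into the monotonicity lemma, is that a copy-holder's metric drifts continuously as nodes move, so quality must be pinned to the acquisition instant and the lemma phrased via the induced ordering of first-intersection times rather than a single snapshot.
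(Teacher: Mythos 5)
Your proposal takes a genuinely different route from the paper, which never argues per realisation: the paper fixes an idealised two-hop delivery model, writes the delivery probability as $1-\left(1-P_{(L-1)}\right)^{(L-1)}\times\frac{1}{K-1}$ with $P_{(L-1)}=\lambda_{i,j}\times\lambda_{j,d}\times\left(\frac{1}{K-1}\right)^2$, and concludes simply by noting that the per-encounter replication probability $\lambda_{i,j}$ is larger for S-TBGR, because at the first replication opportunity condition \eqref{con2} with $V^T_M$ initialised to an infinitely large value reduces to $\phi_{j,d}<\frac{\pi}{2}$ alone and is therefore implied by condition \eqref{con1}. Your sample-path coupling is more ambitious, but its central invariant is false. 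The DF threshold $V^T_M$ is monotone non-increasing (it records the best metric seen so far), whereas S-ABGR compares $U_j$ against the carrier's \emph{instantaneous} metric $U_i(t)$, which fluctuates as the carrier moves. Concretely: at the first encounter both schemes replicate to $N_{j_1}$ with $U_{j_1}=100$, so $V^T_M$ drops to $100$; at the second encounter the carrier has $\phi_{i,d}<\frac{\pi}{2}$ and $U_i=200$ and meets $N_{j_2}$ with $\phi_{j_2,d}<\frac{\pi}{2}$ and $U_{j_2}=150$. S-ABGR replicates ($200>150$) while S-TBGR declines ($100\not>150$), so S-ABGR now holds strictly more copies; both halves of your invariant (ticket count, and termwise domination up to the length of the S-ABGR list) fail. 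This is not the obstacle you flag (S-TBGR exhausting its tickets first): here S-TBGR still holds tickets and simply refuses, because its threshold has become more selective than S-ABGR's moving target.

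The failure also undercuts your monotonicity lemma. In the IID-exponential-meeting model you fixed, every copy-holder meets $N_d$ at the same rate regardless of its geometric metric at acquisition, so ``quality'' contributes nothing to the delivery probability and only the number of copy-holders matters --- precisely the quantity on which S-ABGR pulls ahead in the realisation above. The fallback via stochastic dominance of first-delivery times inherits the same defect. To repair the argument you would need either to restrict the comparison to the first replication event, where S-TBGR's condition is provably weaker (this is essentially all the paper's $\lambda_{i,j}$ comparison establishes), or to work in a model where a smaller acquisition metric genuinely shortens the hitting time of $N_d$, in which case the IID-exponential assumption must be abandoned and the coupling rebuilt around the geometry rather than around encounter counts.
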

\begin{proof}
Considering there is single message which can be replicated with additional $(L-1)$ copies, its delivery probability $P_M$ is calculated
as:
\begin{equation}\footnotesize
P_M=1-\left(1-P_L\right)^L=1-\left(1-P_{(L-1)}\right)^{(L-1)}\times\frac{1}{K-1}
\end{equation}
where $L$ is the number of message copies including the original message, meanwhile $P_{(L-1)}$ is the probability that each replicated copy is successfully relayed from the source to destination.
This equation shows a larger $L$ and $P_{(L-1)}$ lead to a higher $P_M$, based on a
given number of nodes $K$ in a network.
Considering the RWP mobility model, $\frac{1}{K-1}$ is the probability that the original message is directly delivered if the message carrier encounters destination. It also equals to the probability that any pairwise nodes encounter.

\begin{figure}[htbp]
\begin{center}
\includegraphics[scale=0.43]{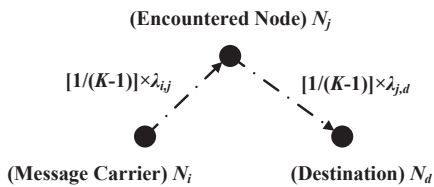}
\caption{Message Delivery Process in S-TBGR}\vspace{-10pt}
\label{2hp}
\end{center}
\end{figure}

Since each message copy is delivered within two hops shown in Fig.\ref{2hp}, message copies are relayed to the first $(L-1)$ encountered nodes $N_j$.
Then these messages copies are only delivered when the destination $N_d$ is in proximity.
In this case, $P_{(L-1)}$ is calculated as:
\begin{equation}\footnotesize
P_{(L-1)}=\lambda_{i,j}\times\lambda_{j,d}\times\left(\frac{1}{K-1}\right)^2
\end{equation}
Here, we have the relaying possibility from $N_j$ to $N_d$, as $(\lambda_{j,d}=1)$, since message is directly delivered when $N_d$ is met.

Recall that the routing decision of S-ABGR follows conditions \eqref{con1},  while that of S-TBGR follows conditions \eqref{con2}.
Here, S-TBGR has a higher value of $\lambda_{i,j}$ than S-ABGR.
This is because that given $\left(\phi_{j,d}<\frac{\pi}{2}\right)$, the probability to meet $\left(V^T_M>\frac{D_{j,d}-R}{S_j\times\cos\phi_{j,d}}\right)$ is 1 because $V^T_M$
is initialized with an infinitely large value, which is larger than the probability that $\left(\frac{D_{i,d}-R}{S_i\times\cos\phi_{i,d}}>\frac{D_{j,d}-R}{S_j\times\cos\phi_{j,d}}\right)
\cap
\left(\phi_{i,d}<\frac{\pi}{2}\right)$ as adopted in S-ABGR.
Therefore, S-TBGR achieves a higher delivery ratio due to a larger $P_{(L-1)}$.
\end{proof}

\begin{theorem}
S-TBGR achieves a lower delivery delay than S-ABGR.
\label{t2}
\end{theorem}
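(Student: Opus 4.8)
The plan is to reuse the two-hop delivery picture from the proof of Property~\ref{t1}. Because every copy that reaches $N_d$ does so in at most two hops ($N_i\to N_j\to N_d$, Fig.~\ref{2hp}), I would split the expected delivery delay as $E[T_M]=E[T^{\mathrm{h1}}]+E[T^{\mathrm{h2}}]$, where $T^{\mathrm{h1}}$ is the time the source $N_i$ needs to place the $(L-1)$ replicated copies onto qualifying relays, and $T^{\mathrm{h2}}$ is the residual time until the first among the $L$ copy holders meets $N_d$. Under the RWP assumption the pairwise meeting times are IID exponential with rate $\lambda$, so $N_i$ meets \emph{some} node at rate $(K-1)\lambda$ and, conditioned on a meeting, the encountered node passes the relaying test with probability $\lambda_{i,j}$; hence qualifying encounters form a thinned stream of rate $\lambda_{i,j}(K-1)\lambda$ and $E[T^{\mathrm{h1}}]$ is of order $(L-1)/\bigl(\lambda_{i,j}(K-1)\lambda\bigr)$ (a harmonic-sum refinement if one insists on sampling distinct relays).

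The first key step is to quote the fact already isolated in the proof of Property~\ref{t1}: since $V^T_M$ starts at an infinitely large value, in S-TBGR every encountered $N_j$ with $\phi_{j,d}<\frac{\pi}{2}$ qualifies, so $\lambda_{i,j}$ equals $\Pr(\phi_{j,d}<\frac{\pi}{2})$, whereas S-ABGR applies the stricter test \eqref{con1}, which additionally requires $\phi_{i,d}<\frac{\pi}{2}$ and $\frac{D_{i,d}-R}{S_i\cos\phi_{i,d}}>\frac{D_{j,d}-R}{S_j\cos\phi_{j,d}}$. Therefore $\lambda_{i,j}$ is strictly larger for S-TBGR, and substituting into the expression above gives $E[T^{\mathrm{h1}}]_{\text{S-TBGR}}<E[T^{\mathrm{h1}}]_{\text{S-ABGR}}$: the copies are dispersed strictly sooner. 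The second step is to observe that $T^{\mathrm{h2}}$ has the same law under both schemes, because once a node holds a copy it delivers it directly the moment $N_d$ enters range ($\lambda_{j,d}=1$), so $T^{\mathrm{h2}}$ is the minimum of $L$ residual exponential meeting times with $N_d$, whose distribution depends only on $L$ and $\lambda$ and not on the relay-selection rule. Adding the two contributions, $E[T_M]$ is strictly smaller for S-TBGR, which is the claim.

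The step I expect to be the main obstacle is making the ``$\min$ over $L$ copies'' argument rigorous, since $T^{\mathrm{h1}}$ and $T^{\mathrm{h2}}$ are not genuinely independent: a copy handed off at time $t_m$ contributes $t_m+X_m$ with $X_m$ its own residual meeting time with $N_d$, and the delivery delay is $\min_m(t_m+X_m)$ rather than the last handoff time plus an independent residual. I would close this by coupling the two algorithms on a common RWP sample path: on that path every copy that S-ABGR succeeds in placing is matched to a copy that S-TBGR places no later (its test is weaker), so the S-TBGR delivery instant is pointwise no larger, and it is strictly smaller on the positive-probability event where $N_i$'s first relay encounter occurs while $N_i$ moves away from $N_d$ --- S-ABGR then replicates nothing while S-TBGR does. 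Taking expectations yields the strict inequality. A less delicate alternative, matching the level of rigour of Property~\ref{t1}, is simply to compare $E[T^{\mathrm{h1}}]$ and $E[T^{\mathrm{h2}}]$ term by term as above and invoke linearity of expectation.
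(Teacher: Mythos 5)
Your argument reaches the right conclusion, but by a genuinely different route from the paper's. The paper does not decompose the delay at all: it first sandwiches $Delay_{(S-ABGR)}$ between $Delay_{(DD)}$ (the case where no encounter ever satisfies \eqref{con1}, so the carrier behaves like Direct Delivery) and $Delay_{(S-SaW)}$ (the case where every encounter qualifies), and then simply infers the claim from Property~\ref{t1}: a larger relaying probability $\lambda_{i,j}$ means faster dissemination of the $(L-1)$ copies, hence lower delay. Your two-term decomposition $E[T_M]=E[T^{\mathrm{h1}}]+E[T^{\mathrm{h2}}]$, with the qualifying encounters forming a thinned stream of rate $\lambda_{i,j}(K-1)\lambda$ and with $T^{\mathrm{h2}}$ having the same law under both schemes, makes the paper's ``faster dissemination'' step explicit; it effectively anticipates the formula $Delay_{(S-TBGR)}=\sum^{L-1}_{H=1}\frac{EMT}{\lambda\times(K-H)}+\left(\frac{K-L}{K-1}\right)\times EW$ that the paper only writes down later, in the proof of Property~\ref{t3}. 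Your version buys a quantitative and checkable statement; the paper's version buys brevity at the cost of leaving the link between ``higher delivery probability'' and ``lower delay'' informal.

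One caution on your coupling refinement: the test \eqref{con2} is only unambiguously weaker than \eqref{con1} at the first qualifying encounter, when $V^T_M$ is still infinitely large (which is exactly the observation Property~\ref{t1} relies on). Once $V^T_M$ has been lowered by the Delegation Forwarding update, \eqref{con2} compares $N_j$ against the best node seen so far rather than against the current carrier, and can therefore reject an encounter that \eqref{con1} would accept; so the pathwise claim ``every copy S-ABGR places, S-TBGR places no later'' is not literally true beyond the first hand-off, and the strict pointwise domination does not hold on every sample path. The term-by-term comparison with a single averaged relaying probability $\lambda$ --- your ``less delicate alternative'' --- is the one that matches the paper's own level of rigour and is sufficient for the stated property.
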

\begin{proof}
The delivery delay of S-ABGR follows:
\begin{equation}\footnotesize
Delay_{(DD)}>Delay_{(S-ABGR)}>Delay_{(S-SaW)}
\end{equation}
In one case, if any encountered node does not meet conditions \eqref{con1}, the message is delivered only when the destination is in proximity.
This is the same as DD where the message is never relayed by any intermediate node, because it is only delivered when the destination is in proximity.
In another case, the message will be relayed if any encountered node meets conditions \eqref{con1}.
This follows S-SaW where a message will be replicated to the first $(L-1)$ encountered nodes.
By means of this, the delivery delay of DD and S-SaW are the lower and upper bounds for that of S-ABGR.
Since S-TBGR achieves a higher delivery probability than S-ABGR based on Property \ref{t1}, the former thereby is with a lower delivery delay due to a faster message copies dissemination.
\end{proof}

\begin{theorem}
B-TBGR achieves a lower delivery delay than S-TBGR.
\label{t3}
\end{theorem}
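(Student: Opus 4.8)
The plan is to reduce Property~\ref{t3} to the well-understood comparison between the spray phases of binary and source spraying, and then to convert the faster spreading of copies into a lower expected delay using exactly the RWP meeting-rate argument already employed for Properties~\ref{t1} and~\ref{t2}.

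First I would isolate what differs between the two schemes. The relay-qualification tests — condition~\eqref{con2} together with the local-maximum rule~\eqref{local1} — are identical in S-TBGR and B-TBGR, since the way the copy ticket $C_M$ is split is, as noted, independent of the utility metric. Hence an encounter turns a non-holder into a holder with the same per-encounter probability in both schemes; the sole difference is that in S-TBGR every replicated copy carries $C_M=1$, so only the source ever replicates and the $L$ copies are placed one after another through the source's successive qualifying encounters, whereas in B-TBGR every holder with $C_M>1$ keeps replicating, so the copies fan out along a balanced binary tree of depth $\lceil\log_2 L\rceil$.

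Second I would make the spreading comparison precise. Couple the two processes on a common sequence of encounter events and let $N_S(t)$ and $N_B(t)$ denote the numbers of copy holders at time $t$. In S-TBGR there is always exactly one active replicator (the source, until it exhausts its tickets), while in B-TBGR the number of active replicators is non-decreasing and at least one for as long as unplaced copies remain; therefore the instantaneous rate at which a new copy is created is never smaller under B-TBGR, which yields $N_B(t)\ge_{st}N_S(t)$ for every $t$. Equivalently, this is the classical observation that binary spraying completes in $O(\log_2 L)$ encounter ``rounds'' versus $O(L)$ for source spraying. The delay conclusion then follows as in Property~\ref{t2}: under RWP each holder meets $N_d$ independently at rate proportional to the pairwise encounter probability $\tfrac{1}{K-1}$, so given $n$ holders the message is delivered at rate $\propto n$; integrating this hazard against the stochastically larger $N_B(t)$ gives a stochastically smaller first-delivery time, hence $Delay_{(B-TBGR)}<Delay_{(S-TBGR)}$.

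The main obstacle is tightening the second step into a genuine coupling rather than a heuristic. One has to argue that the binary-tree ``rounds'' overlapping in real time does not reverse the dominance, that holders carrying different residual ticket counts can be matched consistently, and — the most delicate point — that the threshold trajectory $V^T_M$, which is lowered at a different pace in the two schemes because their relay histories differ, does not make the per-encounter qualification probability smaller for B-TBGR. I expect this last issue can be sidestepped at the level of rigor used elsewhere in the paper by treating qualification as an exogenous thinning common to both schemes, after which the induction on successive encounter events that establishes $N_B(t)\ge_{st}N_S(t)$, and hence the delay bound, goes through.
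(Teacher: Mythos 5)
Your proposal is correct and takes essentially the same route as the paper: both reduce the claim to the binary-versus-source spraying comparison under a common per-encounter qualification probability $\lambda$ (your ``exogenous thinning'' is exactly the paper's simplification), and both rest on the same key fact that at distribution depth $H$ B-TBGR has $2^{H-1}$ simultaneous replicators versus one in S-TBGR, with an identical wait-phase term. The only difference is packaging --- the paper compares explicit expected-delay sums $\sum_{H}\frac{EMT}{\lambda(K-H)}$ versus $\sum_{H}\frac{EMT}{\lambda\,2^{H-1}(K-2^{H-1})}$ adapted from the Spray-and-Wait analysis, whereas you phrase the same spreading advantage as a coupling and stochastic dominance of holder counts.
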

\begin{proof}
Referring to \cite{4430784}, the delivery delay of S-SaW is given by:
\begin{equation}\footnotesize
Delay_{(S-SaW)}=\sum^{L-1}_{H=1}\frac{EMT}{K-H}+\left(\frac{K-L}{K-1}\right)\times EW
\end{equation}
where $H$ is the distribution depth as shown in Fig.\ref{sprayprocess}, $EMT$ is the expected meeting time under the RWP mobility model.
Meanwhile, $\left(EW=\frac{EMT}{L}\right)$ is defined as expected duration that a message copy is delivered if the destination is in proximity.

\begin{figure}[htbp]
\begin{center}
\includegraphics[width=8cm,height=4.3cm]{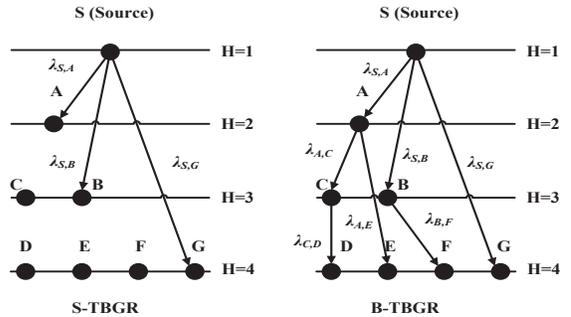}
\caption{Distributing Message Copies in S-TBGR and B-TBGR}\vspace{-10pt}
\label{sprayprocess}
\end{center}
\end{figure}

Note that the expected time until one of another $(L-H)$ message copy is distributed, equals to the time until the source node meets one of $(K-H)$ nodes which does not have this message. Therefore, the total delay for distributing $(L-1)$ copies is $\sum^{L-1}_{H=1}\frac{EMT}{K-H}$.
Meanwhile, $\left(\left(1-\frac{L-1}{K-1}\right)\times EW=\left(\frac{K-L}{K-1}\right)\times EW\right)$ equals to the delay for encountering destination, meaning that destination is not in the first $(L-1)$ encountered nodes.

Rather than replicating a message copy with $(C_M=1)$ in S-SaW, a copy is conditionally replicated in S-TBGR if the encountered node meets conditions \eqref{con2}.
Here, we denote a general relaying possibility $\lambda\in[0,1]$ to simplify analysis, where $\lambda$ might be different by comparing the geographic utility metrics between pairwise nodes.
As such, we have:
\begin{equation}\footnotesize
Delay_{(S-TBGR)}=\sum^{L-1}_{H=1}\frac{EMT}{\lambda\times(K-H)}+\left(\frac{K-L}{K-1}\right)\times EW
\end{equation}
Next, we extend above derivation for $Delay_{(B-TBGR)}$, as:
\begin{equation}\footnotesize
Delay_{(B-TBGR)}=\sum^{H_{max}}_{H=1}\frac{EMT}{\lambda\times2^{H-1}\times \left(K-2^{H-1}\right)}+\left(\frac{K-L}{K-1}\right)\times EW
\end{equation}
This is because the number of nodes which have message copies is $2^{H-1}$ at the $H$ distribution depth.
Then the time for one of these $2^{H-1}$ nodes, relays a message copy to one of $\left(K-2^{H-1}\right)$ nodes (without message) but meets condition \eqref{con2}, is calculated as $\frac{EMT}{\lambda\times2^{H-1}\times\left(K-2^{H-1}\right)}$.
Note that $\left(H_{max}<L-1\right)$ is the maximum depth required for distributing $(L-1)$ message copies.
Given $(H_{max}>2)$ that the message has not been delivered, then we observe $Delay_{(B-TBGR)}$ is lower than $Delay_{(S-TBGR)}$ given Fig.\ref{sprayprocess}.
In other words, B-TBGR distributes more message copies than S-TBGR given the same depth, which contributes to a lower delivery delay.
\end{proof}

\begin{theorem}
The proposed approach for handling the local maximum problem enhances the routing reliability and reduces the delivery delay.
\label{t4}
\end{theorem}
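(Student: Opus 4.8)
The plan is to reuse the Random WayPoint / IID exponential meeting-time model of Properties~\ref{t1}--\ref{t3}, and to regard TBGR with local maximum handling as B-TBGR augmented by the conditional replication rule of lines~12--14 in Algorithm~\ref{dgs1}, governed by condition~\eqref{local1}. First I would split each message's lifetime into a \emph{non-stalled} regime, in which some encountered node satisfies condition~\eqref{con2}, and a \emph{local-maximum} regime, in which no encountered node does. In the latter regime B-TBGR performs no further replication, so a copy reaches $N_d$ only when $N_d$ itself comes into proximity --- the DD-like behaviour identified in the proof of Property~\ref{t2}. Since the local maximum rule is an \texttt{ELSIF} branch, it is inert in the non-stalled regime; hence TBGR and B-TBGR can differ only in the local-maximum regime, and it suffices to prove the rule can only help there.

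For the reliability part I would argue as in the proof of Property~\ref{t1}. In the local-maximum regime the extra replications, triggered whenever $\left(V^T_M>T^{ini}_M-T^{ela}_M\right)$, raise the effective relaying possibility from the value B-TBGR attains when stalled to a strictly larger one, so more of the $L$ copies are actually handed to fresh relays $N_j$ before the TTL expires. Feeding this larger relaying possibility into $P_{(L-1)}=\lambda_{i,j}\times\lambda_{j,d}\times\left(\frac{1}{K-1}\right)^{2}$ and then into $P_M=1-\left(1-P_L\right)^{L}$ yields a strictly larger delivery probability, which is the ``enhances routing reliability'' claim. I would also note that the guard $\left(V^T_M>T^{ini}_M-T^{ela}_M\right)$ suppresses replications that cannot beat the TTL, so the added copies are not wasted.

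For the delay part I would mirror the proofs of Properties~\ref{t2} and~\ref{t3}. In the local-maximum regime B-TBGR sits at the slow DD upper bound, whereas the replication rule pushes copy distribution forward, shrinking each depth term $\frac{EMT}{\lambda\times 2^{H-1}\times\left(K-2^{H-1}\right)}$ in the $Delay_{(B-TBGR)}$ expression through an increased $\lambda$; combined with the higher delivery probability just established, the ``higher probability $\Rightarrow$ faster dissemination $\Rightarrow$ lower delay'' chain used for Property~\ref{t2} gives the reduced-delay claim. A short remark on nodal activity would be included: since line~7 already excludes stationary relays via $(S_j\neq 0)$, the rule additionally covers the temporarily stationary carrier noted earlier, and the monotonicity above is unaffected because this only adds forwarding opportunities.

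The step I expect to be the main obstacle is making the local-maximum-regime analysis quantitatively honest rather than merely directional: one must (a)~bound, under RWP, the probability that the local maximum event actually occurs; (b)~argue that once $V^T_M$ has been lowered below its infinite initial value it is large enough relative to $T^{ini}_M-T^{ela}_M$ often enough for~\eqref{local1} to fire; and (c)~confirm that a copy forwarded under the rule lands on a node whose expected residual trajectory is no worse, so the forwarded copy improves the situation rather than merely relocating the stall. If a clean bound on (a)--(c) proves unwieldy, I would fall back to the comparison-of-bounds argument of Property~\ref{t2}, showing that TBGR-with-handling is sandwiched strictly between the DD bound and B-TBGR's non-stalled behaviour.
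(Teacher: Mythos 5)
Your proposal is correct and follows essentially the same route as the paper: the paper's own proof is just a two-sentence appeal to Properties~\ref{t2} and~\ref{t3}, arguing that the extra replications triggered by condition~\eqref{local1} distribute $C_M$ faster, which yields a higher delivery probability and lower delay at the cost of (bounded, since copies are limited to $L$) overhead. Your non-stalled/local-maximum regime decomposition and your list of quantitative gaps (a)--(c) are elaborations the paper does not attempt, but the underlying argument is identical.
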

\begin{proof}
By jointly referring to Property \ref{t2} and Property \ref{t3}, a faster distributing $C_M$ until $(C_M=1)$ achieves a higher delivery probability and a lower delivery delay, with an expense of higher overhead.
However, since the total number of message copies is limited, TBGR outperforms B-TBGR if the TTL is finite.
\end{proof}

\begin{theorem}
The routing decision in TBGR is tolerant to nodal activity.
\label{t5}
\end{theorem}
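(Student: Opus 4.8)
The plan is to reduce ``tolerance to nodal activity'' to a structural property of Algorithm \ref{dgs1}: every replication test used by TBGR depends only on the cached threshold $V^T_M$ and on the encountered node $N_j$, never on the message carrier $N_i$'s own motion, and any temporarily stationary $N_j$ is filtered out before it can be selected. I would first fix the meaning of \emph{nodal activity}: a node is \emph{inactive} in a slot when its speed is zero, so that its geometric metric $\frac{D-R}{S\cos\phi}$ is not a finite positive number. The claim then splits into three checks: (a) the decision is well defined and unchanged when $N_i$ is inactive; (b) an inactive $N_j$ never receives a copy; and (c) neither situation disturbs the convergence-to-the-best-relay behaviour established around \eqref{con2} and the local-maximum safeguard \eqref{local1}.

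For (a) I would simply read off that the only branches that trigger replication, lines 8 and 12, test conditions \eqref{con2} and \eqref{local1}, whose free variables are $V^T_M$, $D_{j,d}$, $S_j$, $\phi_{j,d}$, $T^{ini}_M$, $T^{ela}_M$ --- and not $S_i$ or $\phi_{i,d}$. This is exactly what was bought by rewriting \eqref{con1} as \eqref{con2}: the carrier's quality was deliberately removed. Hence an inactive $N_i$ makes the same decision it would make while moving; the copy simply waits with $N_i$ (as in Direct Delivery) until a qualifying $N_j$ appears, wasting no copy ticket. I would also note the initialization $V^T_M=\infty$ on line 2 makes this robust even when $M$ is generated at an inactive source, since the first qualifying encounter still satisfies \eqref{con2}.

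For (b) I would invoke the guard $(C_M>1)\cap(S_j\neq0)$ on line 7: an inactive $N_j$ never reaches lines 8 or 12, so it is never given a copy. I would argue this is the intended behaviour, not a gap: a stationary $N_j$ makes no geometric progress, so its heuristic delivery time would diverge and fail the strict-improvement test of \eqref{con2} in any case; the guard merely avoids an undefined division and an obviously futile transfer, and $N_j$ is freely re-evaluated at the next encounter once it resumes motion, so no eventually-good relay is permanently lost. Combining (a) and (b), the accepted relay sequence is still precisely the subsequence of encountered nodes that strictly lower $V^T_M$ under \eqref{con2}, so for (c) the monotone-convergence argument after \eqref{con2} transfers verbatim --- activity only thins the candidate stream, it does not change the selection rule --- and \eqref{local1} still fires exactly when the network-wide minimum delivery time recorded in $V^T_M$ exceeds the remaining TTL.

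I expect the main obstacle to be making the informal word ``tolerant'' precise, i.e.\ showing that this thinning of the candidate stream (both by discarding inactive $N_j$ and by the carrier idling) does not eliminate the eventual best relay. Under the RWP / IID-exponential model of the analysis section this follows because every pair of nodes still meets, in expectation, within the lifetime; but in a strictly finite-TTL regime it requires the same caveat already used in the proof of Property \ref{t4}, so I would phrase the conclusion of Property \ref{t5} with that qualification rather than as an unconditional statement.
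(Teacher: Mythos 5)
Your proposal is correct and takes essentially the same route as the paper: the paper's proof likewise contrasts the S-ABGR replication condition (which requires $\left(\phi_{i,d}<\frac{\pi}{2}\right)\cap\left(S_i\neq0\right)$ and hence fails when the carrier is stationary) with the S-TBGR condition, whose only motion-dependent terms are $N_j$'s speed and angle plus the cached threshold $V^T_M$, so a temporarily stationary $N_i$ no longer blocks replication. Your additional checks --- the $\left(S_j\neq0\right)$ guard excluding inactive relays and the observation that activity only thins the candidate stream without altering the selection rule --- are consistent with, and somewhat more explicit than, the paper's informal argument.
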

\begin{proof}
we further discuss our investigation regarding DF, where S-ABGR also does not consider the case that the message carrier $N_i$ is temporarily stationary even if the encountered node $N_j$ is good to relay messages.

In this case, the conditions that S-ABGR replicates messages follow:
\begin{equation}\footnotesize
\begin{split}
&\left(\frac{D_{i,d}-R}{S_i\times\cos\phi_{i,d}}>\frac{D_{j,d}-R}{S_j\times\cos\phi_{j,d}}\right)\cap\left( \phi_{j,d}<\frac{\pi}{2}\right)\cap\left(\phi_{i,d}<\frac{\pi}{2}\right)\\
&\cap\left(S_i\neq0\right)\cap\left( S_j\neq 0\right)
\label{c9}
\end{split}
\end{equation}
The conditions that S-TBGR replicates messages follow:
\begin{equation}\footnotesize
\left(V^T_{M}>\frac{D_{j,d}-R}{S_j\times\cos\phi_{j,d}}\right)\cap\left(\phi_{j,d}<\frac{\pi}{2}\right)\cap\left(S_j \neq0\right)
\label{c10}
\end{equation}
In light of this, the limitation regarding the temporarily stationary state of $N_i$ is also overcome.
This is applicable to TBGR since the way to distribute $C_M$ is independent of utility metric.
It is highlighted that any utility metric related to a time to intersect the destination, via either on-hop based prediction as applied in TBGR, or further multi-hops based prediction, is still applicable to TBGR.
As such, TBGR can be applied to the network with various degrees on network density.
\end{proof}

\subsection{Performance Evaluation}

So far, we have discussed a number of properties for TBGR.
We use the Opportunistic Network Environment (ONE) \cite{onesimulator} version 1.4.1 for evaluation, under the RWP scenario with 1000$\times$1000 $m^2$ area, consisting of 1 stationary destination located at the center of scenario and 100 mobile nodes with the constant 5 m/s moving speed.
Note that this is a homogeneous scenario where nodal encounter is identical.
The communication technique is set with 10m transmission range and 2 Mbit/s bandwidth, considering as the Bluetooth standard.

Here, we have $(L=10)$ as the initial value of copy ticket, based on the discussion in \cite{4430784} that choosing $L$ equals to around 10$\%$ number of mobile nodes in a network.
Messages are randomly generated from mobile nodes for every 30s, with 1KB size and default 60 minutes lifetime.
The configuration guarantees that there is no resource contention for bandwidth and buffer space.
To measure the full activity of a network, the message generation ends up before 18000s with additional 3600s to consume unexpired messages.
We run the entire simulation for 10 times and plot the results with 95$\%$ confidence interval.
\begin{figure*}[htbp]
  \centering
  \subfigure[Delivery Ratio]
  {
  \centering
    \label{f11}
    \includegraphics[width=5.7cm,height=3.3cm]{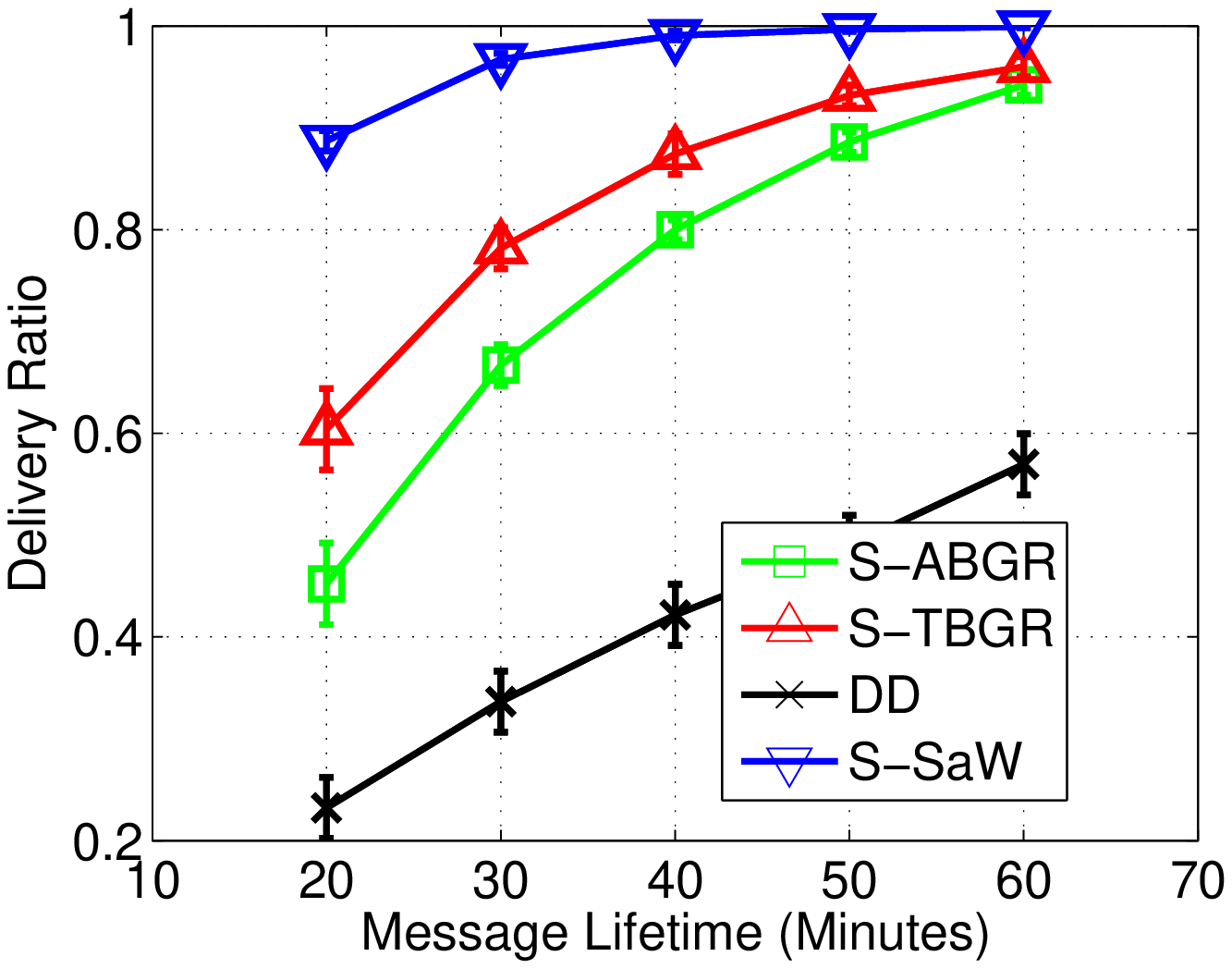}
    }
  \subfigure[Average Delivery Latency]
  {
  \centering
  \label{f12}
  \includegraphics[width=5.7cm,height=3.3cm]{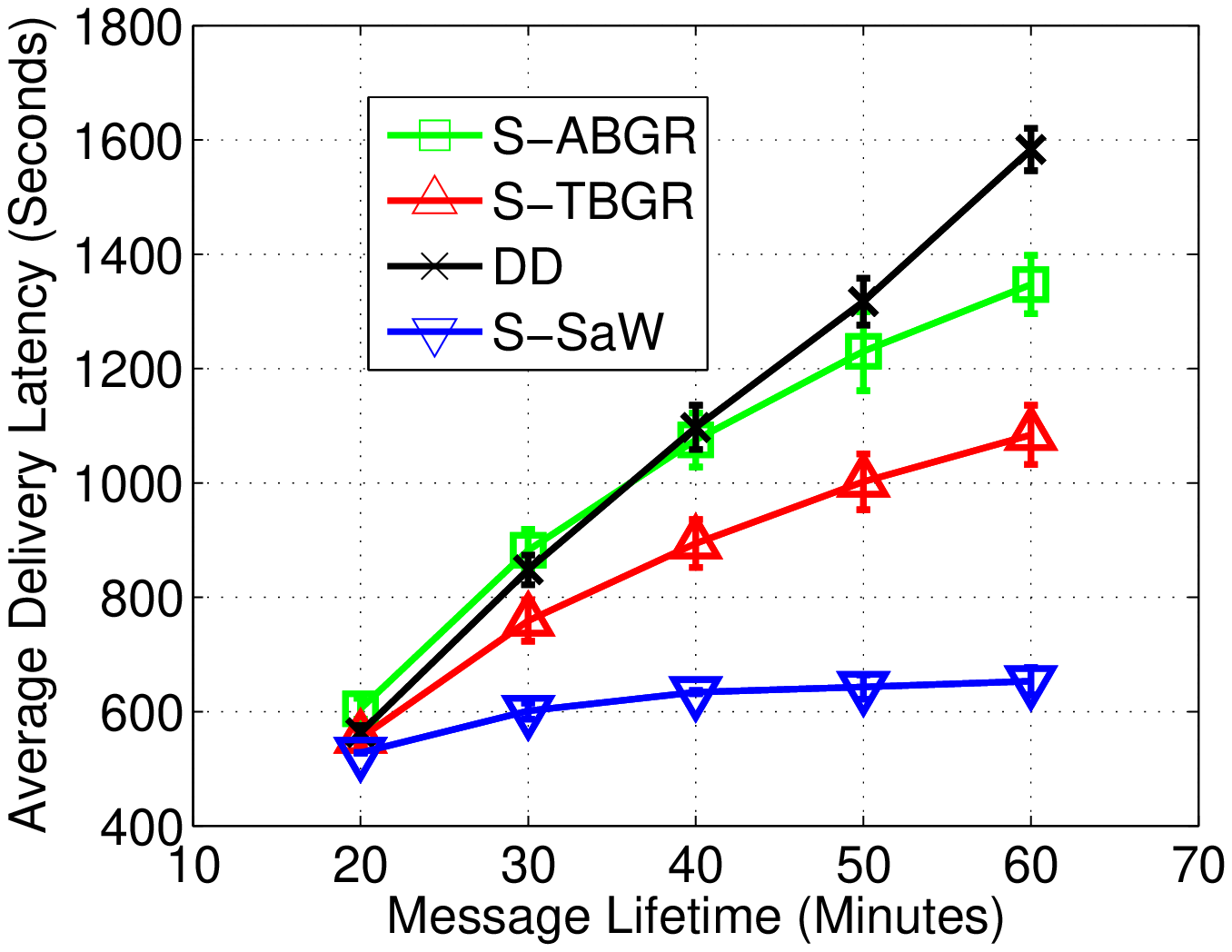}
  }
  \subfigure[Overhead Ratio]
  {
  \centering
  \label{f13}
  \includegraphics[width=5.7cm,height=3.3cm]{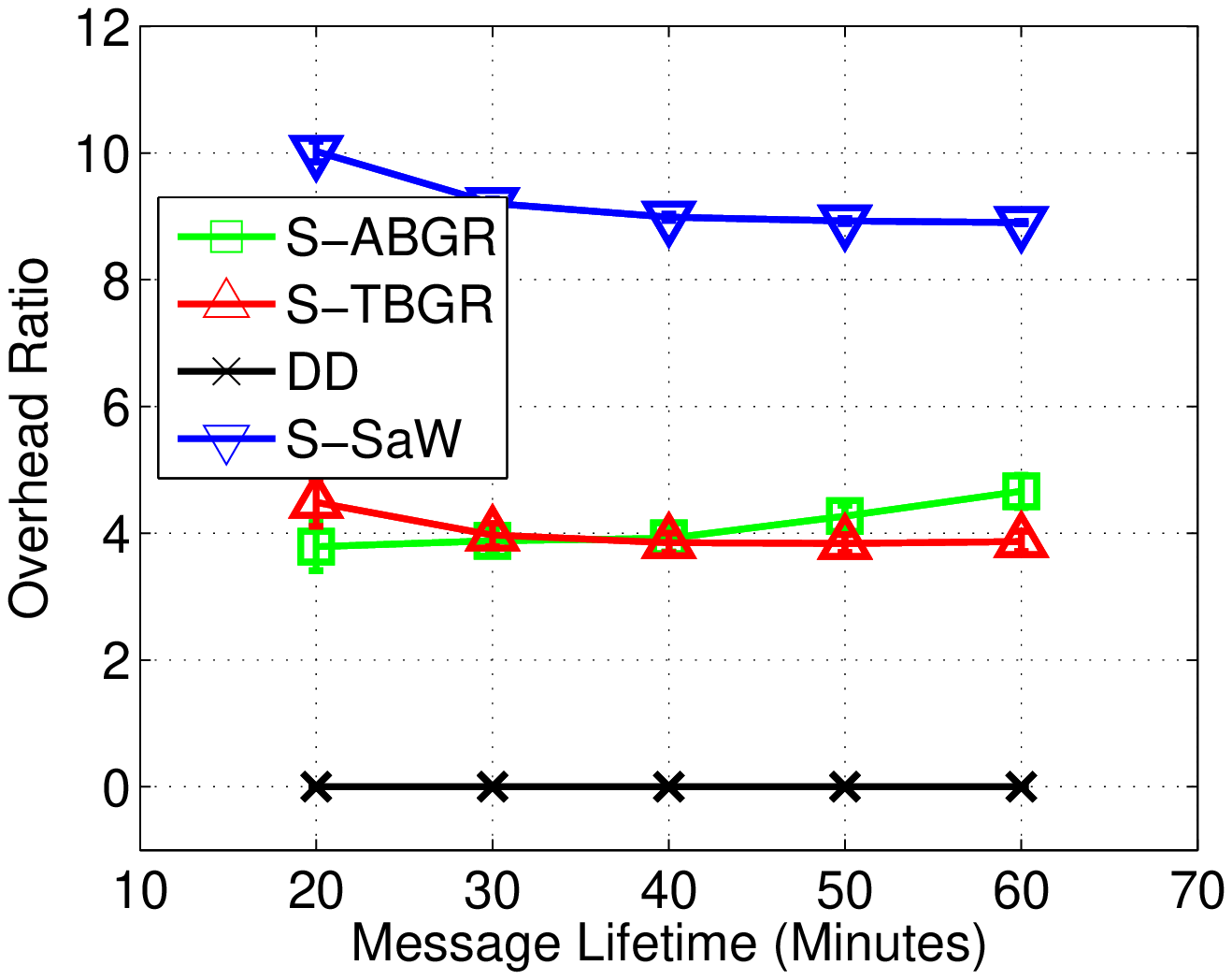}
  }
  \caption{Influence of Message Lifetime Given 10m Transmission Range and 5 m/s Moving Speed}\vspace{-10pt}
\end{figure*}

\begin{figure*}[htbp]
  \centering
  \subfigure[Moving Speed]
  {
  \centering
    \label{f21}
    \includegraphics[width=5.7cm,height=3.3cm]{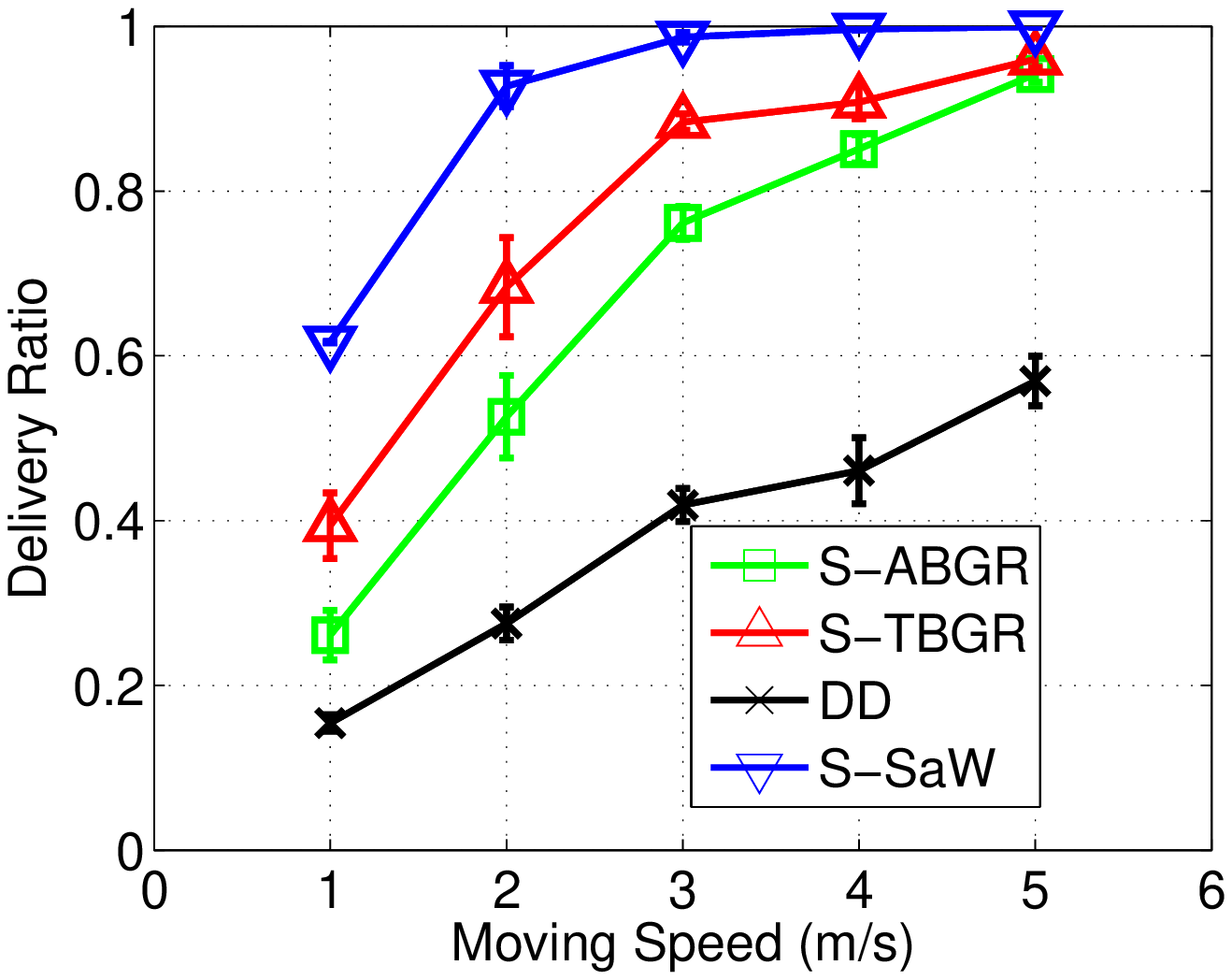}
    }
  \subfigure[Transmission Range]
  {
  \centering
    \label{f22}
    \includegraphics[width=5.7cm,height=3.3cm]{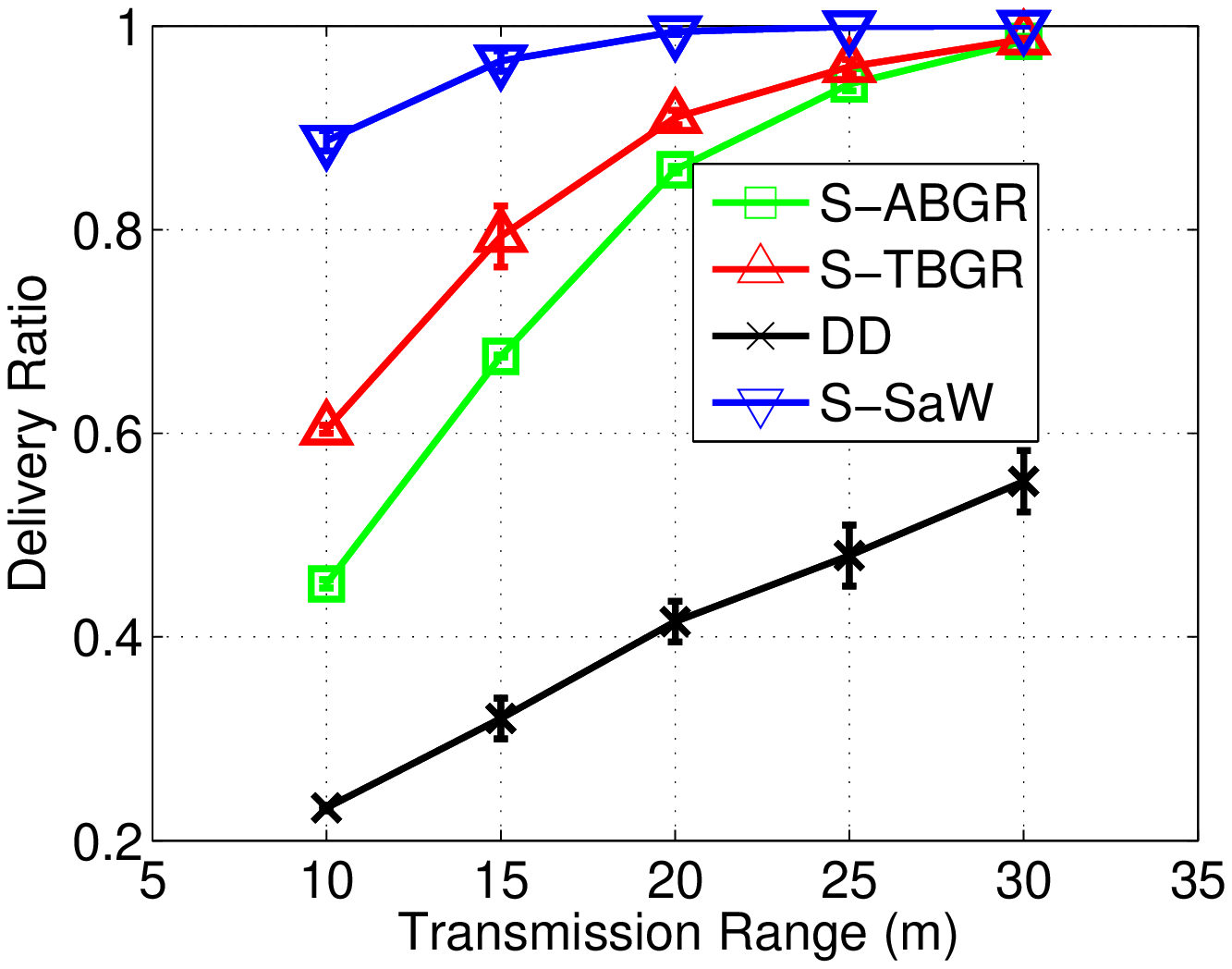}
    }
    \subfigure[Maximum Waiting Time]
    {
  \centering
    \label{f23}
    \includegraphics[width=5.7cm,height=3.3cm]{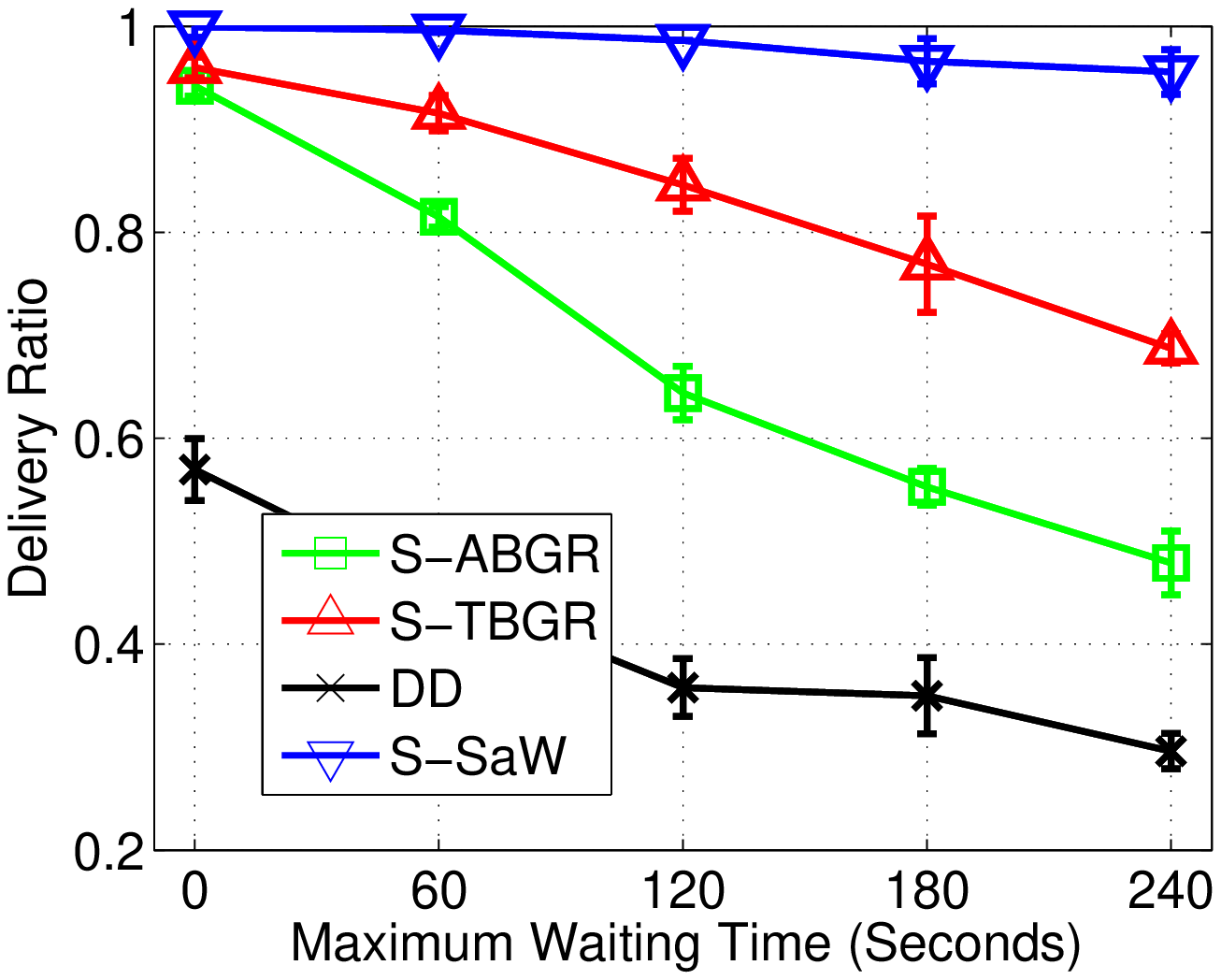}
    }
  \caption{Influence of Moving Speed, Transmission Range and Maximum Waiting Time}\vspace{-10pt}
  \label{2fig}
\end{figure*}

\begin{figure*}[htbp]
  \centering
  \subfigure[Delivery Ratio]
  {
  \centering
    \label{f31}
    \includegraphics[width=5.7cm,height=3.3cm]{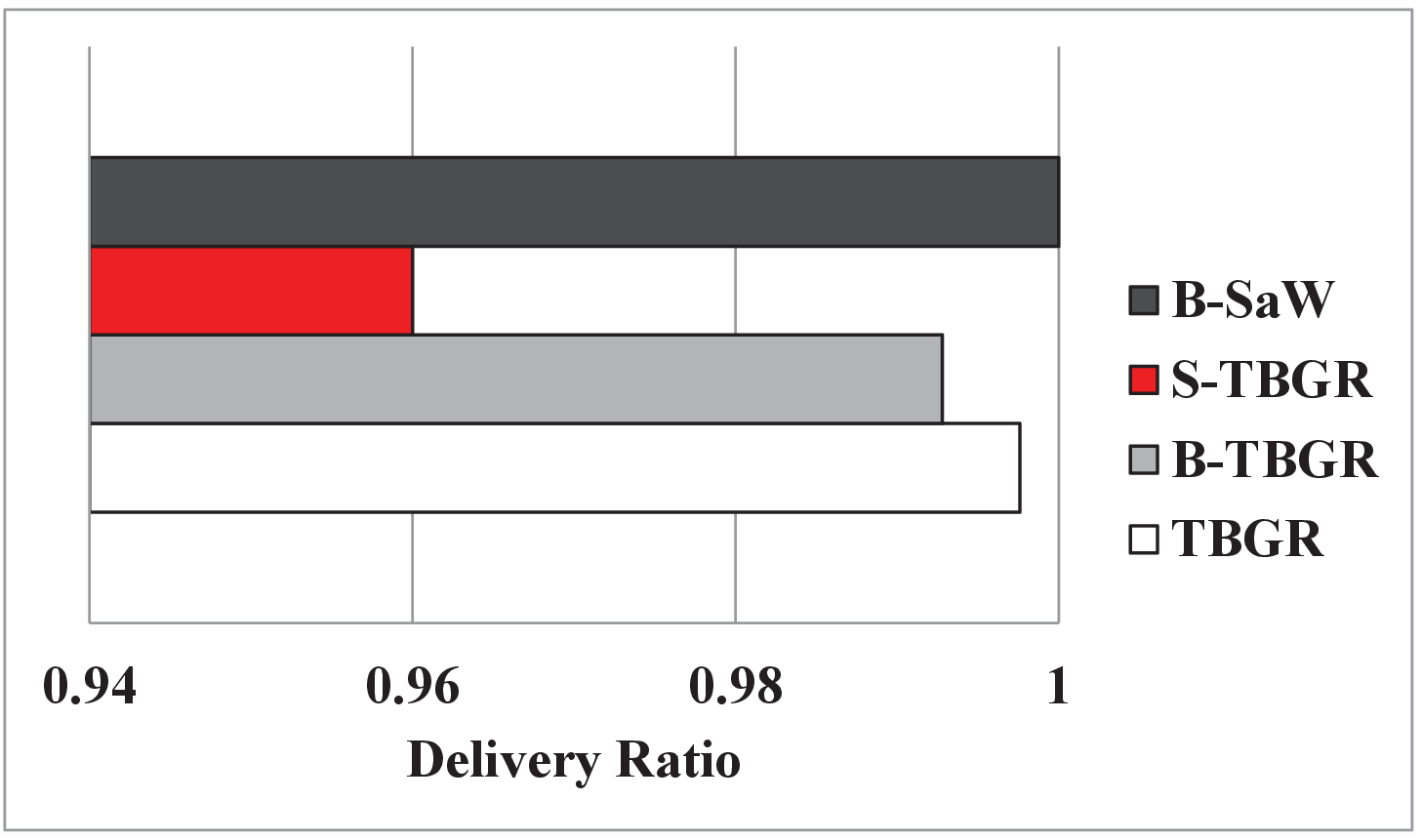}
    }
  \subfigure[Average Delivery Latency]
  {
  \centering
    \label{f32}
    \includegraphics[width=5.7cm,height=3.3cm]{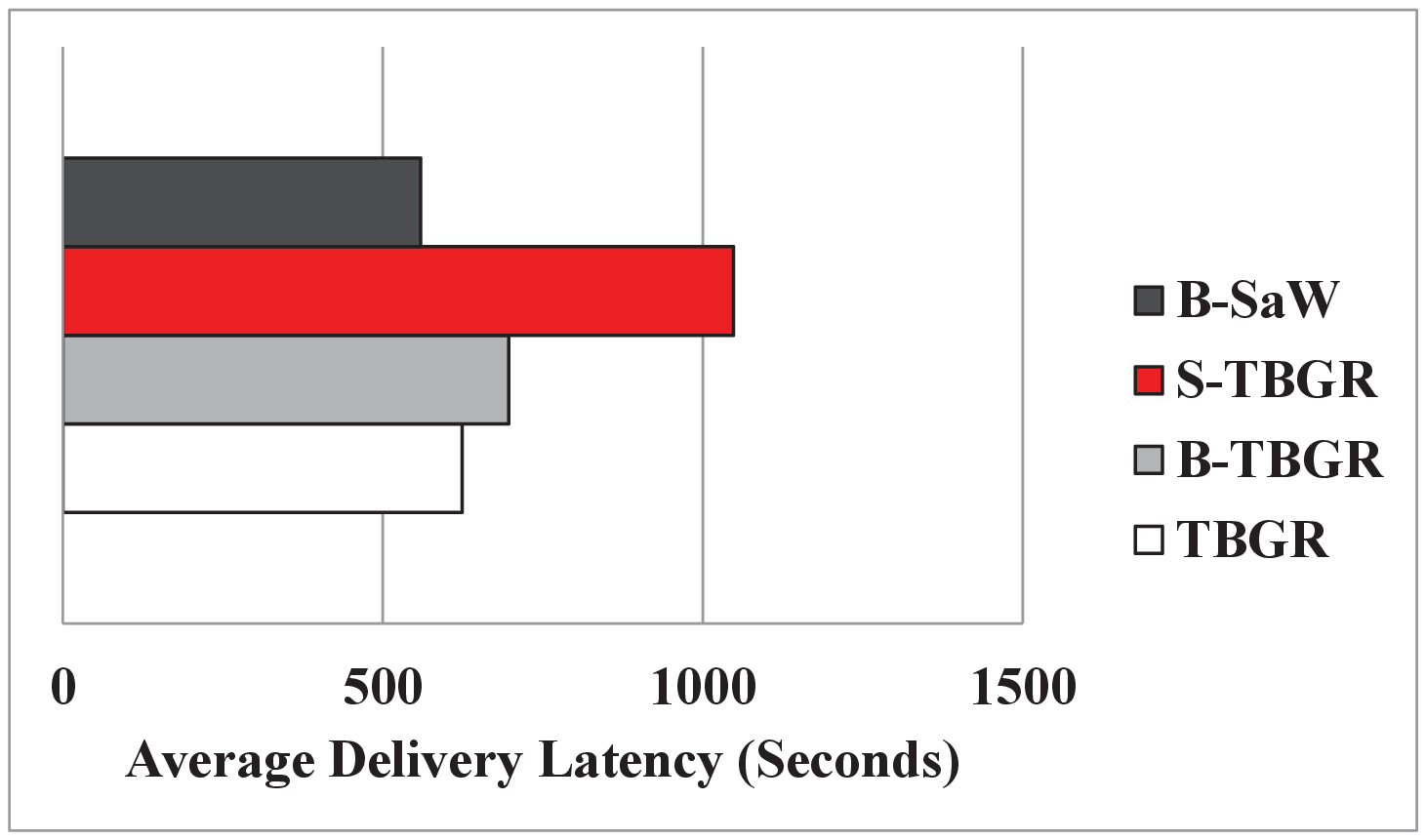}
    }
    \subfigure[Overhead Ratio]
    {
  \centering
    \label{f33}
    \includegraphics[width=5.7cm,height=3.3cm]{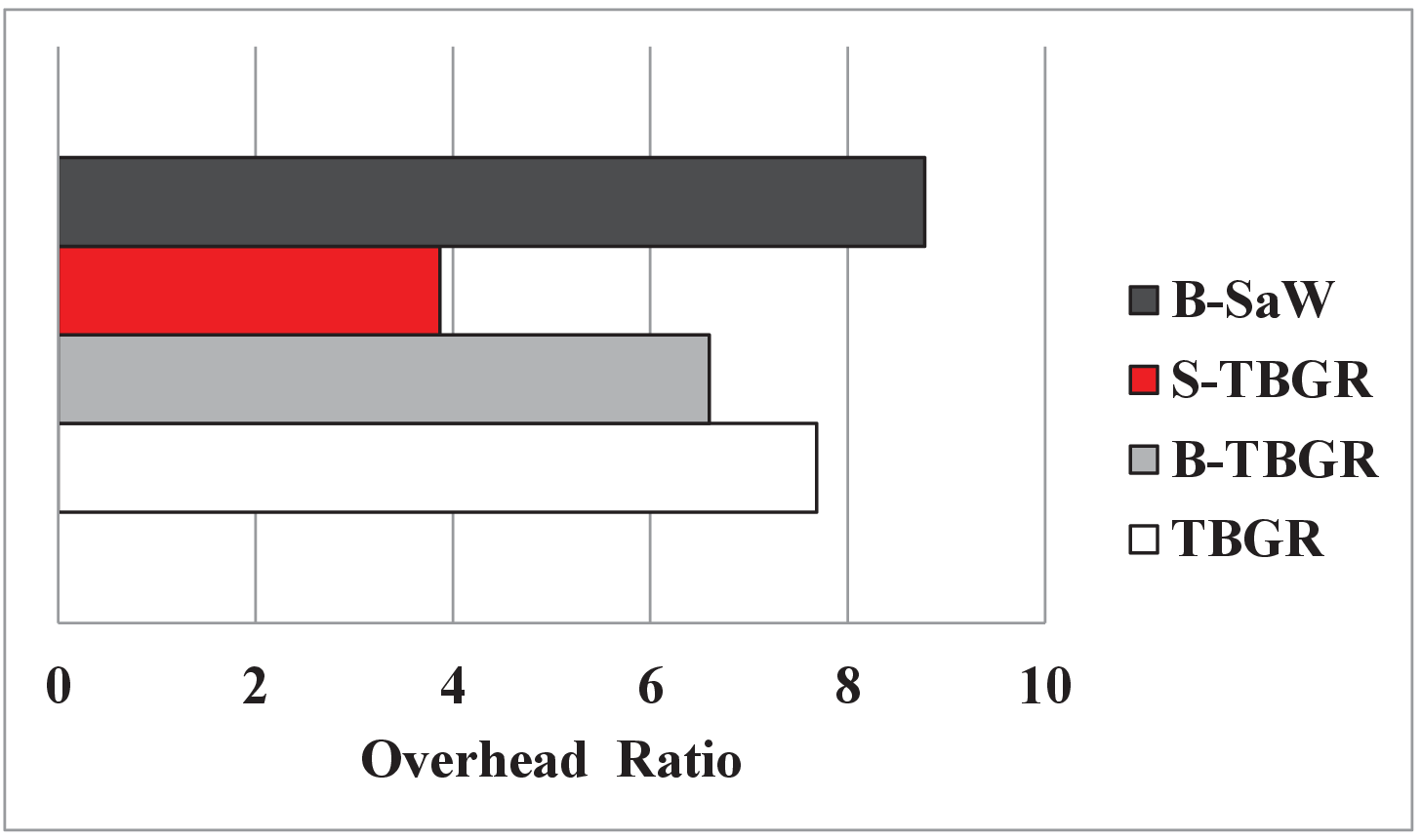}
    }
  \caption{Influence of Handling the Local Maximum Problem}\vspace{-10pt}
\end{figure*}

The discussed properties of TBGR are claimed through the following results, where evaluation metrics are explained as follows:
\begin{itemize}
\item \textbf{Delivery Ratio}: It is the ratio between the number of messages delivered and the total number of messages generated.

\item \textbf{Average Delivery Latency}: It is the average delay for a message to be delivered from the source node to its destination.

\item \textbf{Overhead Ratio}: It is the ratio between the number of relayed messages (excluding the delivered messages) and the number of delivered messages.
\end{itemize}

\subsubsection{Influence of Message Lifetime}
Here, S-SaW \cite{4430784} is the upper bound for S-TBGR and S-ABGR following Property \ref{t1}.
This is because S-SaW does not address candidate node selection, and just replicates $(L-1)$ message copies with $(C_M=1)$ copy ticket distributed each time.
In Fig.\ref{f11}, we observe that S-TBGR outperforms S-ABGR in terms of delivery ratio.
This follows Property \ref{t1} where the limitation of routing decision (in terms of inconsistent moving direction) is overcome.
Note that, DD \cite{1026005} performs as the lower bound since it only relays a message when the destination is in proximity.
Following Property \ref{t2}, S-TBGR achieves a lower delivery delay than S-ABGR in Fig.\ref{f12}, due to that the former replicates $(L-1)$ message copies faster.
In Fig.\ref{f13}, S-TBGR achieves a decreased overhead ratio, compared to S-ABGR suffering from an increased overhead ratio.
This implies the advantage of S-TBGR lets the best (rather than a better) relay node to keep on distributing message copies.
Note that the overhead ratio of DD is 0, because it does not involve intermediate nodes in relay process.

\subsubsection{Influence of Other Conditions}
We also show the delivery ratio based on varied moving speed and transmission range (with 20 minutes TTL configured) in Fig.\ref{f21} and Fig.\ref{f22} respectively, all schemes show similar trends compared with their previous results.
In Fig.\ref{f23} the routing performance is affected by the nodal activity (temporary waiting time).
Here,  S-TBGR outperforms S-ABGR following Property \ref{t5}.

\subsubsection{Influence of Handling the Local Maximum Problem}
The comparison between S-TBGR and B-TBGR is shown in Fig.\ref{f31}, Fig.\ref{f32} and Fig.\ref{f33} following Property \ref{t3}.
Although B-SaW is the upper bound for TBGR, TBGR achieves less than 19.2$\%$ overhead ratio, whereas only with less than 1$\%$ delivery ratio of B-SaW.
In spite of achieving a higher 11.6$\%$ delivery delay, we conclude the efficiency of TBGR by referring to a delay tolerant application for data collection.
Here, we claim Property \ref{t4} that TBGR achieves a comparable delivery ratio, while with a shorter delivery delay and lower overhead.

\section{Design of TBHGR}
\begin{figure}[htbp]
\begin{center}
\includegraphics[scale=0.35]{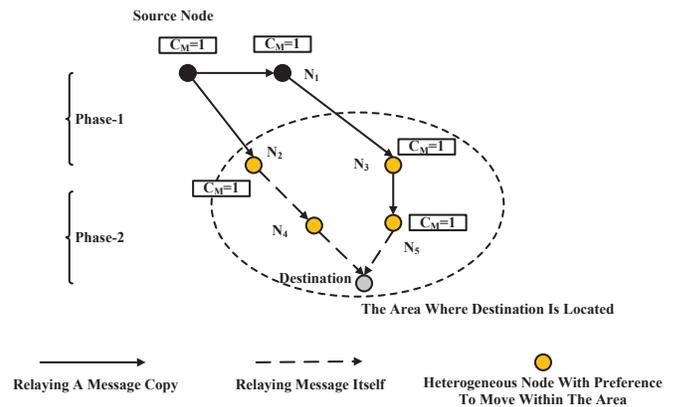}
\caption{An Overview of TBHGR}\vspace{-10pt}
\label{exp}
\end{center}
\end{figure}
In previous section, we introduced TBGR and discussed its properties based on the RWP mobility model where nodal mobility is homogeneous.
However, the realistic mobility model normally comprises heterogeneity.
For example, we can envisage a VSNs scenario where drivers often tend to travel within certain areas due to their visiting preference.
In such case, the mobility of vehicles would be limited to given areas for the majority of time.
Here, if only relying on the homogeneous nodal mobility, the potential for message delivery to a destination located at other area would be degraded. This is mainly because a message with $(C_M>1)$ copy tickets may never be replicated, to those nodes heading to the area where the destination is located.

Based on the design of TBGR and additional notations in TABLE \ref{table2}, as an example shown in Fig.\ref{exp}, the routing decision in The-Best-Heterogeneity-based-Geographic-Relay (TBHGR) is decoupled into two phases depending on the value of $C_M$.
The Phase-1 relays message copies with $(C_M>1)$ copy tickets towards the area where the destination is located, while the Phase-2 relays message copies with $(C_M=1)$ copy ticket within this area.

\subsection{Phase-1 of TBHGR}
\begin{table}[htbp]\scriptsize
\caption{List of Additional Notations Defined in TBHGR}\vspace{-10pt}
\label{configure}
\centering
\begin{tabular}{|p{0.7cm}|p{6.6cm}|}
\hline
$D^{\prime}_{j,d}$ & Projected distance from $N_j$ to $N_d$\\\hline
$W$ & Time window for estimating the projected distance\\\hline
$V^D_M$ & Heuristic distance based threshold value used for recoding $D^{\prime}_{j,d}$\\\hline
$\Psi_j$ & A set in $N_j$ recording the list of encountered nodes in the past\\\hline
$U_M$ & Utility of $M$ \\\hline
\end{tabular}
\label{table2}
\end{table}

As illustrated in Algorithm \ref{spray}, a message with $(C_M>1)$ copy tickets is geographically relayed as follows:
\subsubsection{$The \left(\phi_{j,d}<\frac{\pi}{2}\right)$ Case}
\begin{algorithm}[htbp]\footnotesize
\caption{Phase-1 of TBHGR}
\label{spray}
\begin{algorithmic}[1]
\STATE set $(C_M=L)$
\STATE set $V^T_M$ with an infinitely large value
\STATE set $V^D_M$ with an infinitely large value
\FOR{each encounter between $N_i$ and $N_j$}
\FOR{each $M$ carried by $N_i$}
\IF {$N_j$ already has a copy of $M$}
\STATE update $V^T_M$ and $V^D_M$, for $M$ including its copy carried by both $N_i$ and $N_j$
\ELSIF {$(C_M>1)\cap(S_j\neq0)$}
\IF {$\left(V^T_M>\frac{D_{j,d}-R}{S_j\times\cos\phi_{j,d}}\right)\cap\left(\phi_{j,d}<\frac{\pi}{2}\right)\cap\left(N_d\in\Psi_j\right)$}
\STATE update $V^T_M$ towards $\frac{D_{j,d}-R}{S_j\times\cos\phi_{j,d}}$
\STATE replicate $M$ to $N_j$ with $\frac{C_M}{2}$ copy tickets
\STATE keep $\left(C_M-\frac{C_M}{2}\right)$ copy tickets for $M$ in $N_i$
\ELSIF {$\left(V^T_M>T^{ini}_M-T^{ela}_M\right)\cap\left(\phi_{j,d}<\frac{\pi}{2}\right)$}
\STATE replicate $M$ to $N_j$ with $\frac{C_M}{2}$ copy tickets
\STATE keep $\left(C_M-\frac{C_M}{2}\right)$ copy tickets for $M$ in $N_i$
\ELSIF {$\left(V^D_M>D^{\prime}_{j,d}\right)\cap\left(\phi_{j,d}\geq\frac{\pi}{2}\right)$}
\STATE update $V^D_M$ towards $D^{\prime}_{j,d}$
\STATE replicate $M$ to $N_j$ with $(C_M=1)$ copy ticket
\STATE keep $(C_M-1)$ copy tickets for $M$ in $N_i$
\ENDIF
\ENDIF
\ENDFOR
\ENDFOR
\end{algorithmic}
\label{al1}
\end{algorithm}

Here, each mobile node maintains a set $\Psi$ to record a list of its encountered destinations in the past, where $(N_d\in\Psi_j)$ means $N_j$ had an encounter with $N_d$.
By jointly considering $(N_d\in\Psi_j)$ and conditions \eqref{con2}, we have conditions \eqref{sc} such that a message copy is replicated to $N_j$, with $\frac{C_M}{2}$ copy tickets distributed.
This follows the message relaying process from source node to $N_2$ in Fig.\ref{exp}.
\begin{equation}\footnotesize
\left(V^T_M>\frac{D_{j,d}-R}{S_j\times\cos\phi_{j,d}}\right)\cap\left(\phi_{j,d}<\frac{\pi}{2}\right)\cap\left(N_d\in \Psi_j\right)
\label{sc}
\end{equation}
Following the presentation between lines 9 and 12 in Algorithm \ref{spray}, this executes that a message copy with $\frac{C_M}{2}$ copy tickets is replicated to the node which has preference to visit the destination.
The motivation is that the node that did not meet $N_d$ in the past, may replicate message copies to other nodes without visiting preference.
In the worst case, $N_j$ will never meet $N_d$, even if it knows the location of $N_d$.
As such, some of copies would not contribute to delivery, but result in additional routing overhead.

Furthermore, given the local maximum problem that $N_j$ does not meet conditions \eqref{sc}, a message copy is conditionally replicated with $\frac{C_M}{2}$ copy tickets distributed.
This happens only if the message will expire before $\frac{D_{i,d}-R}{S_i\times\cos\phi_{i,d}}$ as recorded in $V^T_M$.
Here, considering the case $N_j$ has encountered $N_d$ in the past but with $\left(V^T_M\leq\frac{D_{j,d}-R}{S_j\times\cos\phi_{j,d}}\right)$, the condition $(N_d\in\Psi_j)$ is omitted as presented between lines 13 and 15 in Algorithm \ref{spray}.
Such operations increase the possibility to find a node that meets conditions \eqref{sc}, by searching from heterogeneous nodes with different visiting preferences.

\subsubsection{The $\left(\phi_{j,d}\geq\frac{\pi}{2}\right)$ Case}
As presented between lines 16 and 19 in Algorithm \ref{spray}, a distance based geographic utility metric is applied if the encountered node $N_j$ is moving away from $N_d$.
Shown in Fig.\ref{dgrleave}, the projected distance $D^{\prime}_{j,d}$ measured from $N_d$ to the expected location of $N_j$ is calculated as:
\begin{equation}\footnotesize
D^{\prime}_{j,d}=D_{j,d}-W\times\cos\phi_{j,d}\times S_j-R
\label{projecteddistance}
\end{equation}
Given the nodal movement within a time window $W$, the condition $\left(D^{\prime}_{i,d}>D^{\prime}_{j,d}\right)$ implies that $N_j$ is closer to $N_d$, thus $N_j$ is selected to prevent message relaying away from $N_d$.

\begin{figure}[htbp]
\begin{center}
\includegraphics[scale=0.43]{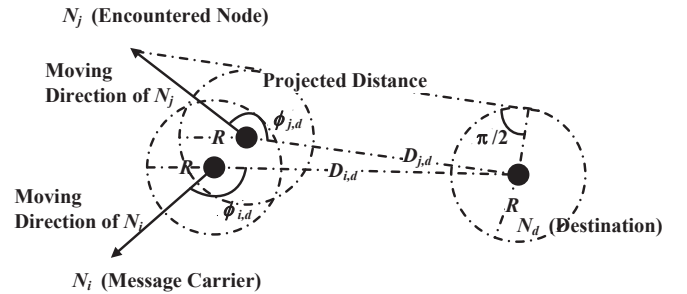}
\caption{Illustration of Projected Distance}\vspace{-10pt}
\label{dgrleave}
\end{center}
\end{figure}

Based on the previous discussion on DF, we define an additional threshold value $V^D_M$\footnote[6]{As an additional flag defined in each message, $V^D_M$ is also initialized with an infinitely large value and updated towards the smallest value, similar to $V^T_M$.} to record the value of $D^{\prime}_{j,d}$, then convert the condition $\left(D^{\prime}_{i,d}>D^{\prime}_{j,d}\right)$ into $\left(V^D_M>D^{\prime}_{j,d}\right)$.
Note that the investigation of DF herein also includes the case given $\left(\phi_{j,d}\geq\frac{\pi}{2}\right)\cap\left(\phi_{i,d}<\frac{\pi}{2}\right)$.
Thanks to updating $V^{D}_M$, the relay node will be gradually selected as the one, either with the $\phi_{j,d}$ close to $\frac{\pi}{2}$ or with the smallest $S_j$.
As such, the value of $(W\times\cos\phi_{j,d}\times S_j)$ is quite small, considering the sparse network density and highly dynamic movement.
Even assuming $(\phi_{j,d}=\frac{\pi}{2})$, the relay node which is closer to $N_d$ will be selected, given that $V^{D}_M$ has been updated to $(D_{j,d}-0-R)$.

Since the message has been relayed with $\frac{C_M}{2}$ copy tickets in case of $\left(\phi_{j,d}<\frac{\pi}{2}\right)$, here a message copy is replicated with only $(C_M=1)$ copy ticket given the condition $\left(\phi_{j,d}\geq\frac{\pi}{2}\right)$.
This is because it is beneficial to distribute more copy tickets for replicating towards destination to reduce delivery delay, rather than doing so for replicating away from destination.
Here, the condition $(N_d\in \Psi_j)$ is not limited, for which the motivation is to faster distribute copy tickets to heterogeneous nodes that would encounter $N_d$ in the future.

Different from the intention in case of $\left(\phi_{j,d}<\frac{\pi}{2}\right)$, it is unnecessary to further handle the local maximum problem given $\left(V^D_M\leq D^\prime_{j,d}\right)$, as the relaying direction is already away from the destination.
Since this message with the $\left(C_M=1\right)$ copy ticket cannot be replicated, those nodes carrying this message but have not encountered $N_d$, will perform the following Algorithm \ref{relay} in Phase-2.

\subsection{Phase-2 of TBHGR}
\begin{algorithm}[htbp]\footnotesize
\caption{Phase-2 of TBHGR}
\label{relay}
\begin{algorithmic}[1]
\FOR{each encounter between $N_i$ and $N_j$}
\FOR{each $M$ carried by $N_i$}
\IF {$N_j$ already has a copy of $M$}
\STATE update $V^T_M$ and $V^D_M$ for $M$ including its copy carried by both $N_i$ and $N_j$
\ELSIF {$(C_M=1)\cap(S_j\neq0)$}
\IF {$\left(V^T_M>\frac{D_{j,d}-R}{S_j\times\cos\phi_{j,d}}\right)\cap\left(\phi_{j,d}<\frac{\pi}{2}\right)\cap(N_d\in\Psi_j)$}
\STATE update $V^T_M$ towards $\frac{D_{j,d}-R}{S_j\times\cos\phi_{j,d}}$
\STATE replicate $M$ to $N_j$
\STATE delete $M$ in $N_i$
\ELSIF {$\left(V^T_M>T^{ini}_M-T^{ela}_M\right)\cap(N_d\in\Psi_j)$}
\STATE replicate $M$ to $N_j$
\ENDIF
\ENDIF
\ENDFOR
\ENDFOR
\end{algorithmic}
\label{al1}
\end{algorithm}
As presented between lines 6 and 9 in Algorithm \ref{relay}, the message with $(C_M=1)$ copy ticket is relayed without keeping the originally carried message, towards the encountered node meeting conditions \eqref{sc}.
Furthermore, the condition $\left(\phi_{j,d}\geq\frac{\pi}{2}\right)$ is not taken into account for routing decision, since relaying such a message away from its destination shall result in redundant routing overhead.
This follows the message relaying process from $N_2$ to $N_4$ in Fig.\ref{exp}.

Here, the value of $V^D_M$ is still updated, as other copies with $(C_M>1)$ copy tickets may still be relayed following conditions \eqref{projecteddistance}.
Since the message relaying is unidirectional, the message with $(C_M=1)$ copy ticket is always relayed to the node
meets condition \eqref{sc}.
This is also similar to the design of achieving loop free in traditional network, by setting a maximum value, such as hop count to prevent further message relay.

Besides, we consider to handle the local maximum problem as presented between lines 10 and 11 in Algorithm \ref{relay}.
This follows the message relaying process from $N_3$ to $N_5$ in Fig.\ref{exp}.
By relying on the contribution in Section \uppercase\expandafter{\romannumeral3}, conditions $\left(V^T_M>T^{ini}_M-T^{ela}_M\right)\cap\left(N_d\in \Psi_j\right)$ will let $N_i$ generate a message copy to $N_j$.
The motivation of considering $\left(N_d\in \Psi_j\right)$, is to relay an additional copy to the encountered node which has met the destination.
This happens only if the shortest time duration to intersect $N_d$, as the updated value of $V^T_M$, is longer than the expiration deadline $\left(T^{ini}_M-T^{ela}_M\right)$.
Note that the condition $\left(\phi_{j,d}<\frac{\pi}{2}\right)$ is not limited herein.
This increases the possibility for $N_j$ to encounter other nodes satisfying conditions \eqref{sc}.

\subsection{Computation Complexity of Routing Decision}
The computation complexity follows $O(\log_2L)\leq O(TBHGR)\leq O(L)$, in order to have $L$ message copies with $(C_M=1)$ copy ticket.
This is because that, on the one hand, TBHGR requires $(L-1)$ times to have $L$ message copies with $(C_M=1)$ copy ticket, if the routing decision is processed by the $\left(\phi_{j,d}\geq\frac{\pi}{2}\right)$ case.
Here, relay nodes are selected with the descending order of $V^D_M$.
On the other hand, $O(\log_2L)$ is required if processed via the $\left(\phi_{j,d}<\frac{\pi}{2}\right)$ case, where relay nodes are selected with the descending order of $V^T_M$.

\subsection{Message Transmission}
As $\left(T^{ini}_M-T^{ela}_M\right)$ is calculated as the remaining message lifetime, a positive value of $\left(\left(T^{ini}_M-T^{ela}_M\right)-V^T_M\right)$ implies the message can be delivered before its expiration deadline, based on the shortest time duration $\frac{D_{j,d}-R}{S_j\times\cos\phi_{j,d}}$ recorded in the past.
Note that since $V^T_M$ will not be updated in case of $\left(\phi_{j,d}\geq\frac{\pi}{2}\right)$, $\frac{\left(T^{ini}_M-T^{ela}_M\right)-V^T_M}{\left(T^{ini}_M-T^{ela}_M\right)}$ is calculated to consider the probability that each message copy is successfully delivered within the expiration deadline, as $P_{(L-1)}$ discussed in Property \ref{t1}.

Next, by considering the copy ticket $C_M$, the message with a larger value of $C_M$ implies that it is still allowed for replication.
Here, since a larger number of message copies will increase the message delivery probability, as discussed in Property \ref{t1}, the message with a larger value of $C_M$ has a higher potential to be delivered.

Based on the above discussion, the utility $U_M$ to qualify each message is defined as:
\begin{equation}\footnotesize
U_M=1-\left(1-\frac{\left(T^{ini}_M-T^{ela}_M\right)-V^T_M}{\left(T^{ini}_M-T^{ela}_M\right)}\right)^{C_M}
\end{equation}
Therefore, $U_M$ implies the potential for message delivery within its remaining lifetime, considering the nodal mobility to intersect destination as well as its number of copies $C_M$.
Here, the message with the largest $U_M$ is managed with the highest priority for transmission.
The key insight behind this is to transmit a message which is still able for replication, while with a long TTL.

\subsection{Buffer Management}
If the remaining buffer space is close to an overflow state, $N_j$ deletes its stored messages from the one with the lowest $U_M$.
This operation continues until the space for those received messages from $N_i$ is allocated.
In addition, when any message is finally delivered, an ACK\footnote[7]{Compared with data message, the ACK message is with quite small size that only contains nodal ID (like a ``string"). Therefore, the bandwidth and buffer space consumed by ACK can be ignored. In fact, the real implementation of ACK is operated as a table list in each node, to record the ID of the delivered messages, rather than recording a whole message copy in this list.} information containing the ID of this message will be generated by destination.
This information will be exchanged when any node encounters destination, and recorded locally in a table list.
Then, an intermediate node has the knowledge of this message ID will check its buffer, and delete the copy of this message.
A fast dissemination of this knowledge benefits from a high mobility.

\subsection{Performance Evaluation}
The scenario is based on the downtown area of Helsinki city with 4500$\times$3400 $m^2$ area.
This is important to evaluate geographic routing, since nodal speed, direction as well as distance are considered in TBHGR.
The moving speed of mobile nodes is randomly chosen from [10$\sim$50] km/h, and their waiting time varies between [0$\sim$240]s.
The communication technique is set with 30m transmission range and 4 Mbit/s bandwidth, considering as the low power WiFi technology.

Envisioning for a heterogeneous network, we also assign four types of Points Of Interests (POIs), by default 3 destinations are deployed shown in Fig.\ref{scenario}, 30 mobile nodes of each group are allocated to each type of POI, with 0.8 probability moving around the POIs and 0.2 probability roaming in the entire network.
Note that mobile nodes will encounter more likely and frequently due to a high interests to a type of POI.
Those areas consist of corresponding POIs, while we plot them as ellipses for simplicity.
Note that even though there is intersection between Area-2 and Area-4, delivering messages from Area-2 to Area-4 is still affected by the nodal movement interest.
\begin{figure}[htbp]
\begin{center}
\includegraphics[scale=0.45]{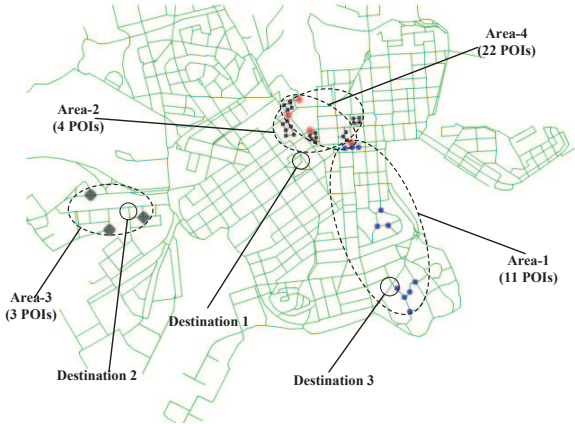}
\caption{The Helsinki City Scenario}\vspace{-10pt}
\label{scenario}
\end{center}
\end{figure}

In addition to B-SaW \cite{4430784} evaluated in Section \uppercase\expandafter{\romannumeral3}, GeoSpray \cite{geospray} is a geographic routing scheme envisioning for VSNs scenario while assuming homogeneous nodal mobility.
Note that GeoSpray limits the number of copies a message can be replicated.
Besides, EBRR \cite{7055214} and EBSR \cite{7055214} are topological routing schemes proposed for VSNs comprising heterogeneous nodal mobility, where EBSR also limits the number of message copies up to $L$.

The simulation runs include an initial warm-up period of 3600s before message generation.
Messages are randomly generated from all mobile nodes for every 30s, with 60 minutes lifetime and 1MB size such that a transmission contention would exist.
The nodal buffer space is set to be 40 MB.
The $W$ in TBHGR is fixed with 5s, while the $L$ for TBHGR, B-SaW, EBSR and GeoSpray is configured as 12.
This also follows \cite{4430784} that choosing $L$ equals to around 10$\%$ number of mobile nodes in a network.
To measure the full activity of a network, the message generation ends before 39600s with an additional 3600s allowed to consume the unexpired messages.
Since a number of parameters related to routing decision have been discussed in Section \uppercase\expandafter{\romannumeral3} under homogeneous scenario, we here turn to the factor of mobility heterogeneity concerning the visiting preference and distribution of destinations.
\begin{figure*}[htbp]
  \centering
  \subfigure[Delivery Ratio]
  {
  \centering
    \label{mi1}
    \includegraphics[width=5.7cm,height=3.3cm]{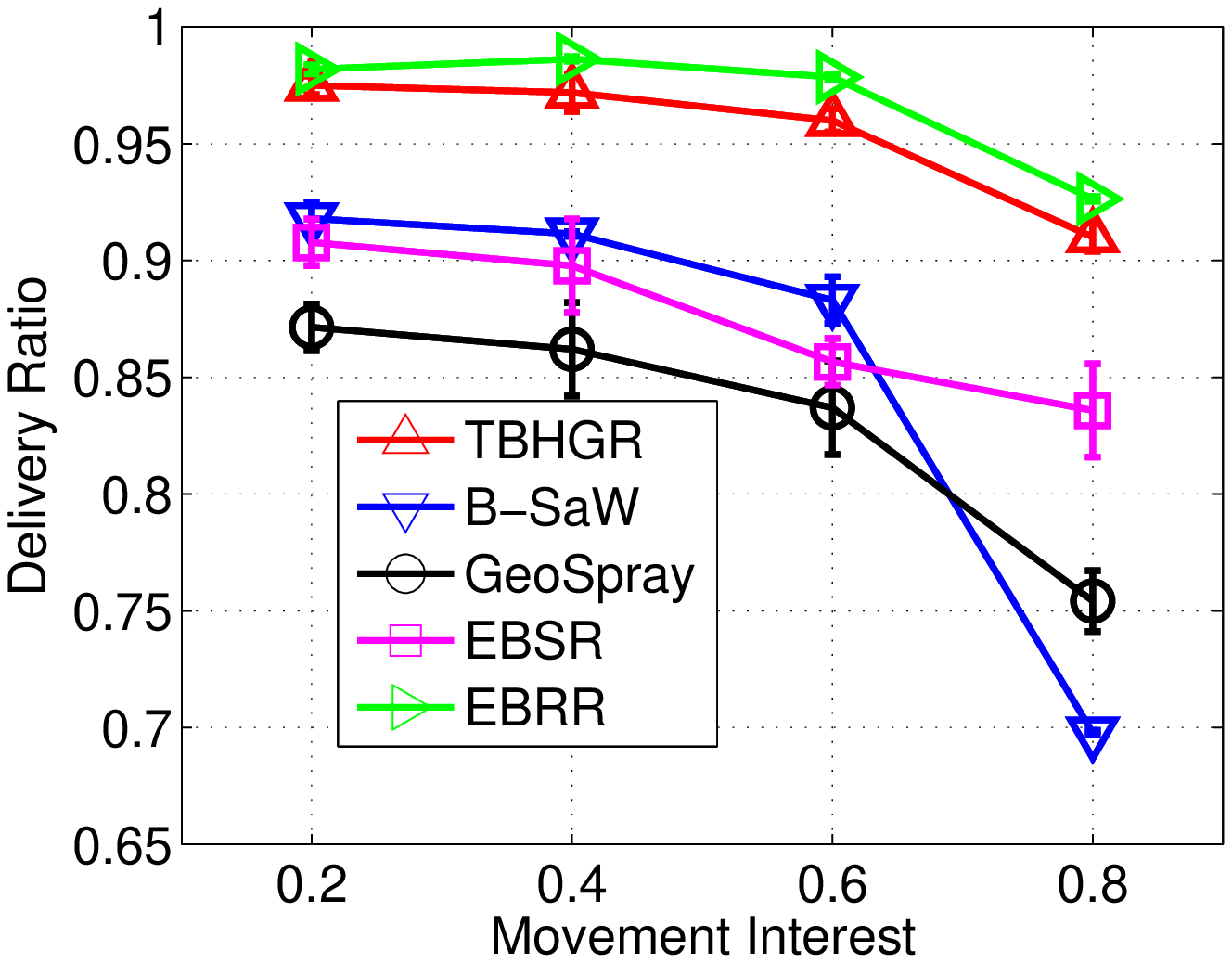}
    }
  \subfigure[Average Delivery Latency]
  {
  \centering
    \label{mi2}
    \includegraphics[width=5.7cm,height=3.3cm]{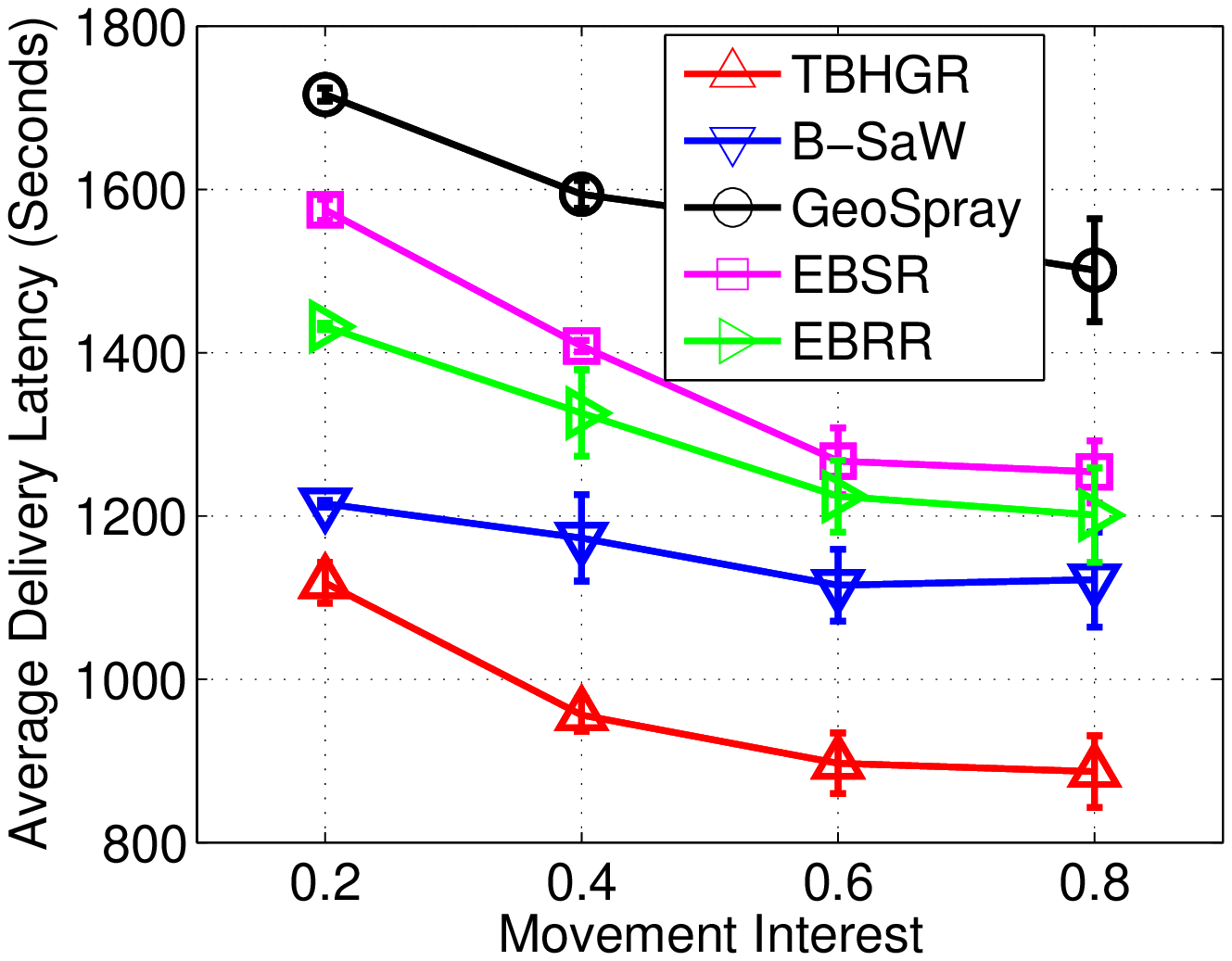}
    }
    \subfigure[Overhead Ratio]
    {
  \centering
    \label{mi3}
    \includegraphics[width=5.7cm,height=3.3cm]{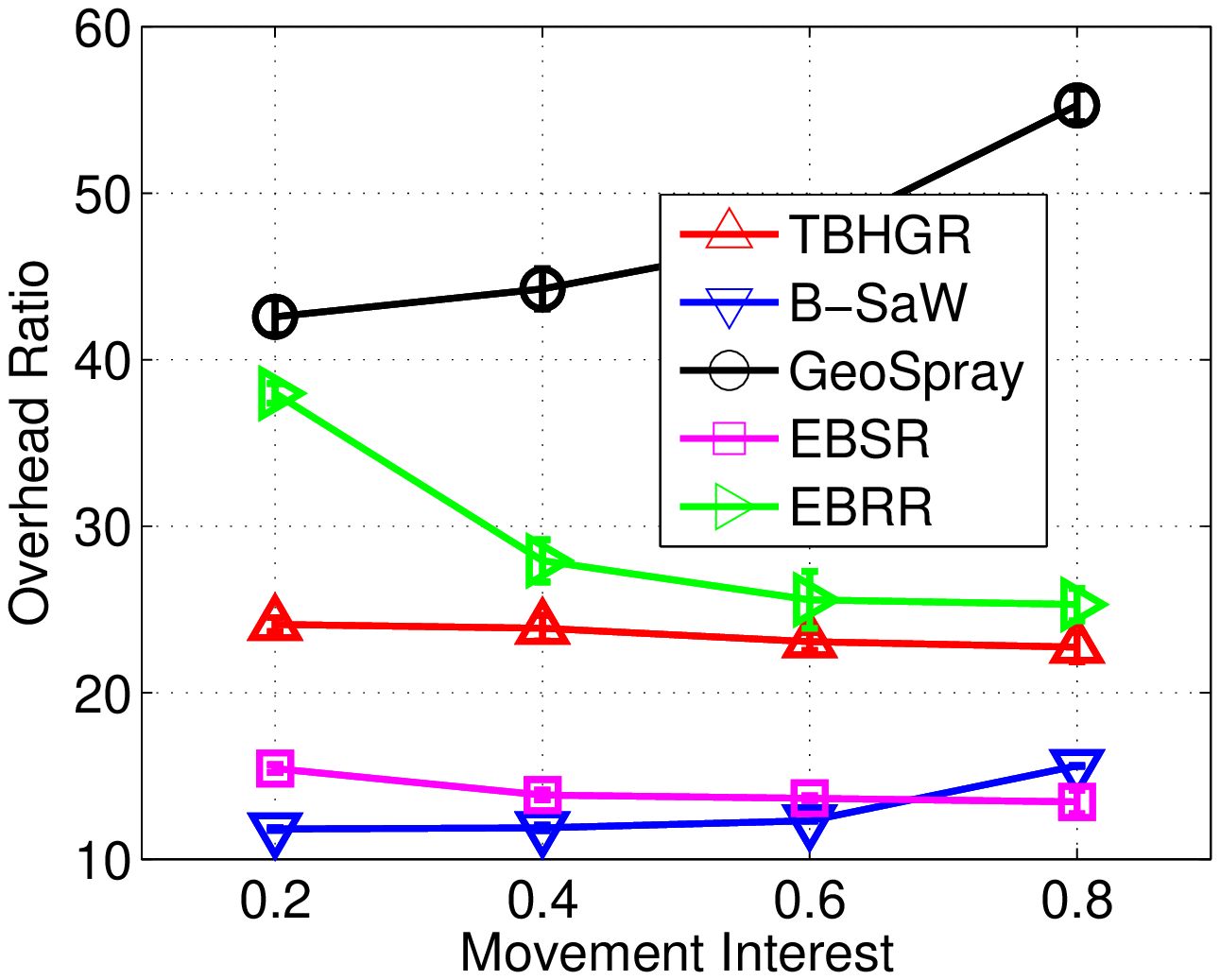}
    }
  \caption{Influence of Movement Interest}\vspace{-10pt}
\end{figure*}

\subsubsection{Influence of Movement Interest}
In Fig.\ref{mi1}, all schemes perform worse in case of 0.8 movement interest.
This is because nodes are highly possible to move around dedicated POIs of areas, rather than just roaming across an entire network.
We also observe that B-SaW suffers more from performance degradation, with increased movement interest.
This is because that it only performs well in case of homogeneous nodal mobility (can be partially reflected by the case with 0.2 movement interest).
However, if nodal mobility is heterogeneous, B-SaW may distribute message copies to $(L-1)$ nodes which never meet destination.
Here, TBHGR achieves the highest delivery ratio comparing with other schemes which initially limit the number of message copies (e.g., B-SaW, EBSR, GeoSpray).

In Fig.\ref{mi2}, all schemes are with a decreased average delivery latency, primarily due to delivering a less number of messages.
Here, TBHGR achieves the lowest value, owing to geographically relaying messages under a highly mobile scenario.
Particularly, TBHGR is more efficient than EBRR (which utilizes topological utility metric), while TBHGR already achieves a close delivery ratio of EBRR.
Here, the main reason for delay improvement is the advantage of capturing geographic information over topological information, note that the latter is extensively fluctuated under such highly dynamic scenario.
Different from topological routing schemes like EBRR and EBSR (inevitably use obsolete historical encounter information), geographic routing schemes like TBHGR and GeoSpray rely on real-time geographic information for mobility prediction.
In DTNs, the large variation of network topology is the main reason degrades the performance of topological routing schemes.
Therefore TBHGR outperforms EBRR and EBSR.
It is worth noting that TBHGR also outperforms B-SaW, because that TBHGR addresses the heterogeneous nodal mobility.

The observation in Fig.\ref{mi3} shows that TBHGR achieves a decreased overhead ratio.
In contrast, GeoSpray is with an increased overhead ratio, mainly due to not considering heterogeneous mobility.
Due to the same reason, B-SaW brings an increased redundancy (but that does not effectively contribute to message delivery).
An important observation is that, not limiting the number of copies of a message for replication in case of low nodal movement interest, would result in much redundancy as performed by EBRR.
While, TBHGR and EBSR are with a smoothly increased overhead ratio, by however limiting the number of message copies.

\subsubsection{Influence of Distribution of Destinations}
Since previous results are shown given pre-deployed destinations, we further implement a location distribution function depending on the nodal movement interest.
Here, a certain number of coordinates are selected from the 40 POIs as already illustrated in Fig.\ref{scenario}.
Meanwhile, the same number of destinations are distributed with a distance variation to those points.
For example, the case with ``7 Des (0 Var)'' indicates the locations of 7 destinations are randomly selected from 40 POIs, without any distance variation.
The destinations with ``500 Var'' indicate their locations are with a minimum 500m away from the randomly selected POIs.
The underlying map scenario is formed by a number of coordinates, where a link between pairwise coordinates forms a route path.
The initial location distribution of a mobile relay can be one of the inherent map coordinates, or any place between pairwise coordinates (on a path).
Here, considering a destination with 500m distance variation might not be located on a map coordinate, we thus select a map coordinate for the destination which is with the closest distance to this given destination.
Note that this finalized location also meets the condition that the distance between destination's location and the given POI is larger than 500m.

In Fig.\ref{rd1}, results are shown in the case of 2 random seeds, namely Seed-1 and Seed-2.
On the one hand, we observe that the number of destinations has less influence on the performance of those routing schemes considering mobility heterogeneity (e.g., TBHGR, EBRR and EBSR).
On the other hand, the distance variation between a destination and POI mainly affects routing performance.
If a destination is far away from the POI, those routing schemes (e.g., B-SaW, EBSR) of which the message replication is limited by $L$, suffer more from performance fluctuation compared with EBRR.
In spite of this, TBHGR achieves a close delivery ratio of EBRR, but with a lower overhead.
Upon this, we claim the efficiency of TBHGR, and its tolerance for the distribution of destinations.

\begin{figure}[htbp]
\begin{center}
\includegraphics[width=9cm,height=5cm]{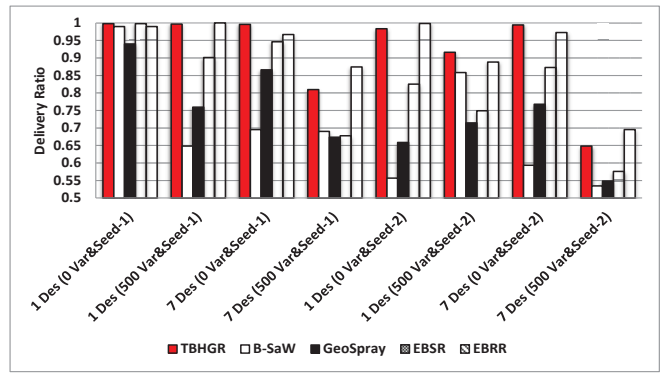}
\caption{Delivery Ratio vs Destinations Distribution}\vspace{-10pt}
\label{rd1}
\end{center}
\end{figure}

\section{Conclusion}
This article addressed the challenges of applying geographic routing in DTNs comprising nodal heterogeneous mobility.
We first presented TBGR and generalized its properties in terms of routing reliability under homogeneous scenarios, where nodal mobility is identical.
Upon this contribution, we next proposed TBHGR for heterogeneous scenarios where nodes have different visiting preference, with the consideration of message scheduling for transmission and storage.
Extensive results under the Helsinki city scenario envisioning for VSNs, with four types of POIs to form nodal heterogeneity, show the efficiency of TBHGR in terms of a low routing overhead and a lower delivery delay, which benefit from applying geographic information under highly dynamic scenarios.

\bibliographystyle{IEEEtran}
\bibliography{refer}
\begin{IEEEbiography}[{\includegraphics[width=1in,height=1.25in,clip,keepaspectratio]{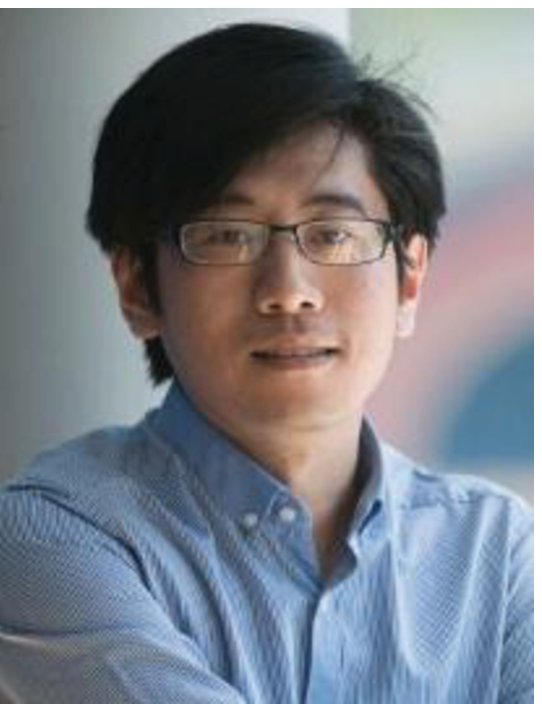}}]{Yue Cao} received his PhD degree the Institute for Communication Systems formerly known as Centre for Communication Systems Research, at University of Surrey, Guildford, UK in 2013. He is currently a Research Fellow at the ICS. His research interests focus on Delay/Disruption Tolerant Networks, Electric Vehicle (EV) communication, Information Centric Networking (ICN), Device-to-Device (D2D) communication and traffic offloading for cellular systems.
\end{IEEEbiography}

\begin{IEEEbiography}[{\includegraphics[width=1in,height=1.25in,clip,keepaspectratio]{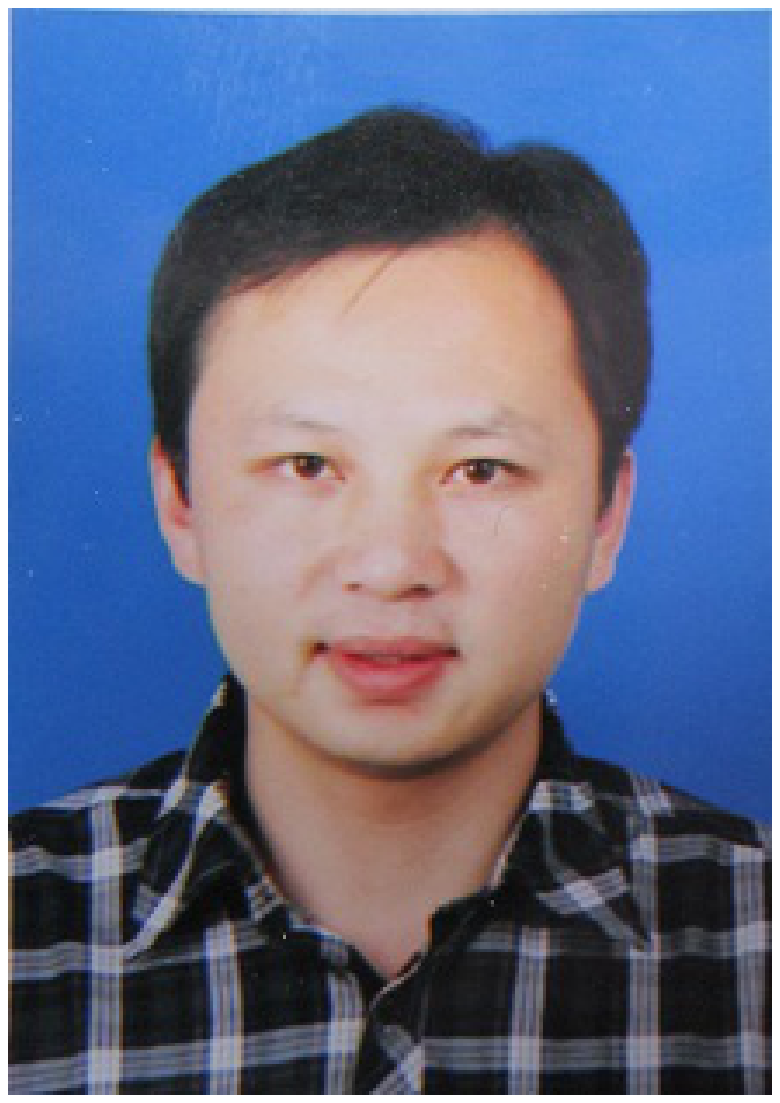}}]{Kaimin Wei} received the PhD degree in computer science and technology from Beihang University, Beijing, China, in 2014. He is currently an associate research professor at the School of Information Technology, Jinan University. His research interests focus on Delay/Disruption Tolerant Networks, Mobile Social Networks, and Cloud Computing.
\end{IEEEbiography}

\begin{IEEEbiography}[{\includegraphics[width=1in,height=1.25in,clip,keepaspectratio]{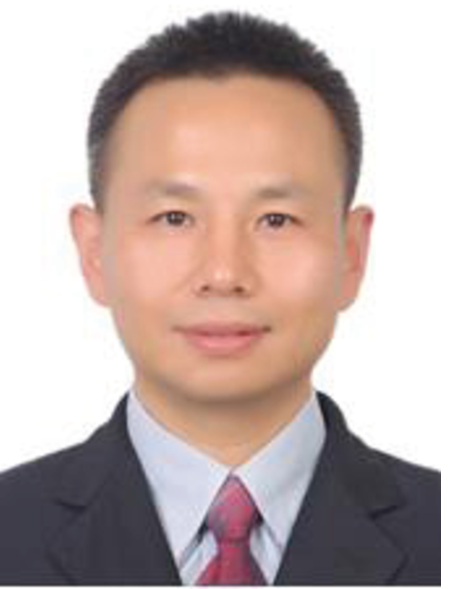}}]{Geyong Min} is a Professor of High Performance Computing and Networking in the Department of
Mathematics and Computer Science at the University of Exeter, UK. He received the PhD degree in Computing Science from the University of Glasgow, UK, in 2003. His research interests include Future Internet, Computer Networks, Wireless Communications, Multimedia Systems, Information Security, High Performance Computing, Ubiquitous Computing, Modelling and Performance Engineering.
\end{IEEEbiography}

\begin{IEEEbiography}[{\includegraphics[width=1in,height=1.25in,clip,keepaspectratio]{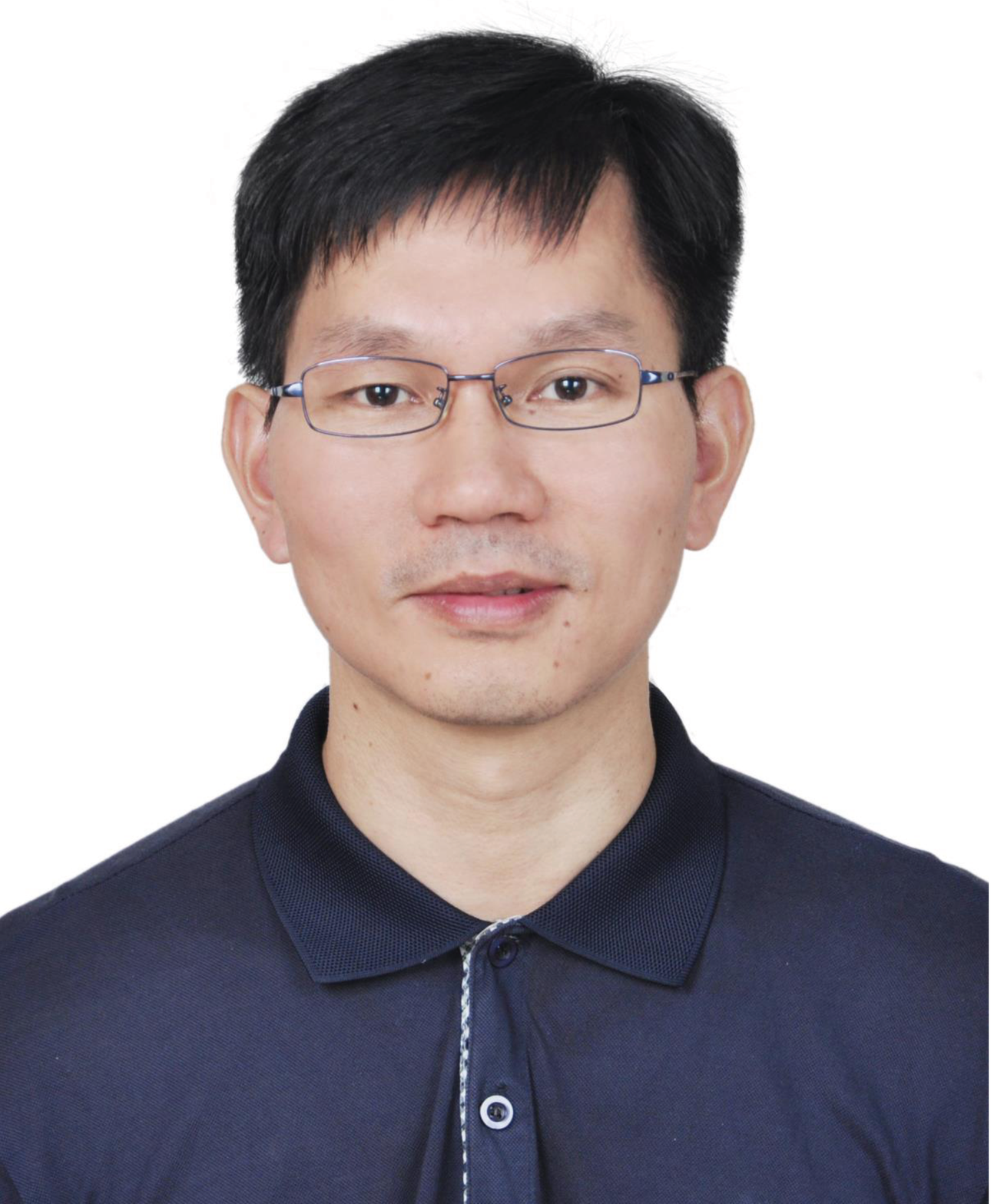}}]{Jian Weng} received the MSc and BSc degrees in computer science and engineering from South China University of Technology, in 2004 and 2000, respectively, and the Ph.D. degree in computer science and engineering from Shanghai Jiao Tong University, in 2008. From April 2008 to March 2010, he was a postdoc in the School of Information Systems, Singapore Management University. Currently, he is a professor and vice dean with the School of Information Technology, Jinan University. He has published more than 60 papers in cryptography conferences and journals, such as CRYPTO, EUROCRYPT, ASIACRYPT, TCC, PKC, CT-RSA, IEEE TDSC, IEEE TIFS, etc. He served as PC co-chairs or PC member for more than 20 international conferences.
\end{IEEEbiography}

\begin{IEEEbiography}[{\includegraphics[width=1in,height=1.25in,clip,keepaspectratio]{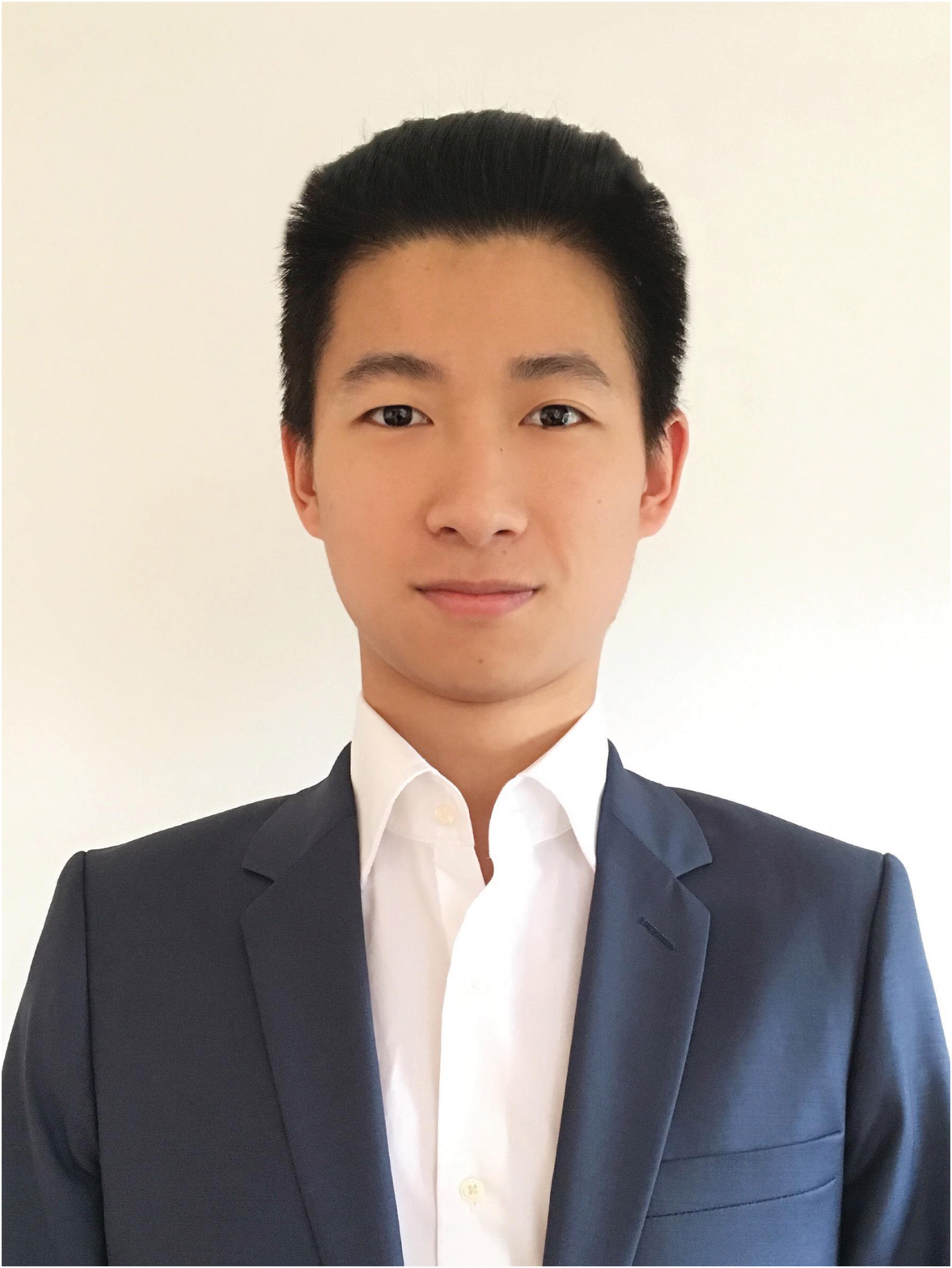}}]{Xin Yang} is currently a PhD Electronic Engineering candidate at Institute for Communication Systems, University of Surrey. He received his BEng Electronic Information Engineering degree from Harbin Institute of Technology, China, and MSc Communication Networks and Software from University of Surrey, UK, in 2013 and 2014 respectively. His current research area include Mobile Ad hoc Networks (MANETs), satellite networks and QoS routing.
\end{IEEEbiography}

\begin{IEEEbiography}[{\includegraphics[width=1in,height=1.25in,clip,keepaspectratio]{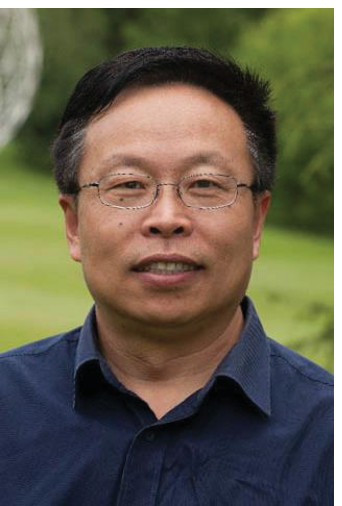}}]{Zhili Sun} (Chair of Communication Networking) is a Professor at the Institute for Communication Systems, University of Surrey, Guildford, UK. He obtained his PhD degree from Lancaster University, Lancaster, UK, in 1991. His research interests include wireless and sensor networks, satellite communications, mobile operating systems, traffic engineering, Internet protocols and architecture, QoS, multicast and security.
\end{IEEEbiography}
\end{document}